\newcommand{\ket}[1]{\big| #1 \big\rangle}
\newcommand{\bra}[1]{\big\langle #1 \big|}
\newcommand{\braket}[2]{\big\langle #1 \big| #2 \big\rangle}                 % < | >
\newtheorem{theorem}{Theorem}[section]
\newtheorem{lemma}[theorem]{Lemma}
\newtheorem{proposition}[theorem]{Proposition}
\newtheorem{definition}[theorem]{Definition}
\newenvironment{proof}[1][Proof]{\begin{trivlist}
\item[\hskip \labelsep {\bfseries #1}]}{\end{trivlist}}
\newcommand{\qed}{\nobreak \ifvmode \relax \else
      \ifdim\lastskip<1.5em \hskip-\lastskip
      \hskip1.5em plus0em minus0.5em \fi \nobreak
      \vrule height0.75em width0.5em depth0.25em\fi}
\begin{document}

% front matter

\title{Staggered Quantum Walks on Graphs}
\author{Renato Portugal}
\affiliation{ National Laboratory of Scientific Computing (LNCC) \\
Petr\'{o}polis, RJ,  25651-075, Brazil\\
}

\date{\today}

\begin{abstract}
The staggered quantum walk model allows to establish an unprecedented connection between discrete-time quantum walks and graph theory. We call attention to the fact that a large subclass of the coined model is included in Szegedy's model, which in its turn is entirely included in the staggered model. In order to compare those three quantum walk models, we put them in the staggered formalism and we show that the Szegedy and coined models are defined on a special subclass of graphs. This inclusion scheme is also true when the searching framework is added. We use graph theory to characterize which staggered quantum walks can be reduced to the Szegedy or coined quantum walk model. We analyze a staggered-based search that cannot be included in Szegedy's model and we show numerically that this search is more efficient than a random-walk-based search.
\end{abstract}

\pacs{02.10.Ox, 03.67.-a, 02.10.Ox}
\maketitle

\section{Introduction}

Coined QWs on graphs were defined in Ref.~\cite{Aharonov:2000}, have been extensively analyzed in literature~\cite{Ven12,Kon08,Kendon:2007,Portugal:Book,Manouchehri2014}, and were used to develop new quantum algorithms, such as, for searching marked vertices on graphs~\cite{Shenvi:2003,Ambainis:2005}. Despite the success of coined QWs, Szegedy~\cite{Szegedy:2004} proposed a new discrete-time QW model without coins on bipartite graphs, which was also used to build new quantum algorithms, for instance, for searching triangles in graphs~\cite{mss07} and for searching webpages in complex networks~\cite{PMCM13}. Generalizations of Szegedy's model was recently proposed in some papers~\cite{HKSS14,MOS16}.

The staggered quantum walk (SQW) model was defined in Ref.~\cite{PSFG15}, which has shown that the entire Sgezedy's model~\cite{Szegedy:2004}, including its searching framework, is contained in the SQW model. Ref.~\cite{Por16} has shown that many coined QWs (DTQWs) can be cast into Szegedy's model, including flip-flop coined QWs employing the Grover or Hadamard coins and the coined QWs using the abstract-search-algorithm scheme~\cite{Ambainis:2005}. Ref.~\cite{Por16} has also shown that if the DTQW on a graph $\Gamma$ is included in Szegedy's model, then the DTQW can be seen as a coinless QW on an enlarged graph $\Gamma'$. The coin space, which is internal in the DTQW, becomes explicit in the equivalent SQW on $\Gamma'$.

The name of the SQW model comes from the staggered fermion formalism~\cite{KS75,Sus77,STW81}, which was proposed to solve some technical difficulties when dealing with fermionic fields in the context of quantum field theories. This formalism is useful to put fermionic fields on a hypercubic lattice to be addressed in the context of the lattice field theory~\cite{Kog79}. Key ideas of the staggered fermion formalism were used in Ref.~\cite{Meyer96} to propose a nontrivial one-dimensional quantum cellular automata avoiding the no-go lemma~\cite{Mey96b}. Similar ideas were used in Refs.~\cite{Patel05,Patel:2010,Patel:2010b} to propose coinless QWs on hypercubic lattices, the one-dimensional case of which includes the one proposed in Ref.~\cite{Meyer96} as a particular case. Refs.~\cite{HKS05} have shown that the one-dimensional case can be included in the DTQW model. The higher dimensional versions, in the way presented in Ref.~\cite{Patel05}, use nonlocal unitary operators violating the principle that walkers must jump only to neighboring sites. The escape for this problem, when the dimension is greater than one, is to assume that the graph on which the QW has been defined is not the hypercubic lattice by adding edges connecting some diagonally-opposed sites as discussed in Ref.~\cite{PSFG15}. 

The basic ideas employed in Refs.~\cite{Meyer96,Patel05} can be summarized in two points. First, they convert the internal spin or quirality state of the particle into extra degrees of freedom by adding new vertices to the lattice, which becomes a larger lattice (noticeable only in the finite case with some boundary conditions). Second, they use two unitary operators with repeated alternating action. Can those ideas be applied for the coined model on a generic graph? Ref.~\cite{Por16} has addressed this problem for a subclass of flip-flop coined models characterized by coins with $(+1)$-eigenvectors obeying special orthogonality properties, called orthogonal reflections. For $d$-dimensional coins, the original graph of the coined model must be enlarged by replacing each vertex with a $d$-\textit{clique}~\footnote{In this work, we employ many technical terms of graph theory, which are in italics to indicate that they are in the glossary in Appendix~\ref{appendixA}.} and by using two alternating unitary operators described by the staggered QW model. The result is a coinless QW on the enlarged graph equivalent to the coined model on the original graph.

The staggered model provides a recipe to build quantum walks on generic graphs by partitioning the vertices into \textit{cliques}. An element of the partition is called a polygon and the union of polygons is called a tessellation. Nontrivial SQWs use at least two tessellations, but depending on the graph more tessellations may be required, for instance, the hypercubic lattice.  For the sake of simplicity, we address only connected 2-tessellable graphs and we prove that a graph is 2-tessellable if and only if its \textit{clique graph} is 2-\textit{colorable}.

This work, besides reviewing some aspects of the SQW model proposed in Ref.~\cite{PSFG15}, characterizes the class of graphs on which 2-tessellable SQWs are equivalent to some Szegedy or coined QWs. We show that the Szegedy and coined QW models are defined in a restricted class of graphs, which is included in the class of \textit{line graphs} of \textit{bipartite graphs}. We prove that a SQW with no edge in the tessellation intersection can be cast into the extended Szegedy QW model. SQWs that are equivalent to Szegedy's QWs inherits the results regarding the advantage of Szegedy's QWs over their classical counterparts, see for instance Refs.~\cite{Szegedy:2004,KMOR15}. On the other hand, a SQW using two tessellations on graphs that are not \textit{line graphs} of bipartite graphs cannot be reduced to Szegedy's QW model.

One of the main applications of QWs is the spatial search problem, whose data are spread out in a lattice, for instance, and each step costs some resource. It was discussed initially by Benioff~\cite{Ben02}, who tried without success to use Grover's algorithm to beat random-walk-based searches. The first efficient quantum-walk-based search seems to be the one in Ref.~\cite{Shenvi:2003}, which used the DTQW model to search a marked vertex in a hypercube. In the coined QW model, the coin used on the non-marked vertices is the Grover coin and the coin used on the marked vertices is the minus identity operator $(-I)$. In the SQW model, vertices are marked using partial tessellations. Ref.~\cite{Ambainis:2013} used a SQW with three tessellations to search for a single marked vertex in a two-dimensional criss-cross lattice (see Ref.~\cite{PSFG15}). In this work, we provide the first example of a 2-tessellable SQW search, which is shown numerically to be more efficient than random-walk-based search.

The structure of the paper is as follows. In Sec.~\ref{sec:Staggered}, we describe how to define the evolution operator of the SQW model. In Sec.~\ref{sec3}, we show that a graph is 2-tessellable if and only if its \textit{clique graph} is 2-\textit{colorable} and we discuss the classes of graphs on which the SQW model reduces to the Szegedy or coined models. In Sec.~\ref{sec4}, we characterize which SQWs can be cast into Szegedy's QW model. In Sec.~\ref{sec5}, we describe how to convert a SQW into an equivalent form in the coined QW model for graphs on a restricted class and give nontrivial examples, namely the honeycomb lattice and the three-state QW on the line. In Sec.~\ref{sec6}, we provide an example of an efficient SQW search, which cannot be put into other QW models.  In Sec.~\ref{sec7}, we draw our conclusions.  Appendix~\ref{appendixA} is a glossary of some terms in graph theory and appendix~\ref{appendixB} gives a formal definition of Szegedy's QW.

\section{Defining the evolution operator}\label{sec:Staggered}

A quantum walk model is a recipe for building an evolution operator based on local unitary operators. Local operators obey the graph structure in the sense that if a particle is on vertex $v$, it can move only to the vertices that are in the neighborhood of $v$. In the discrete-time models, one step of the quantum walk is a product of such operators. The flip-flop coined QW model has an internal space, which can become explicit when the coin is an orthogonal reflection by converting the coin directions into extra vertices as described in Ref.~\cite{Por16}. In this case, the Hilbert space is spanned by the vertices of the extended graph. In this work, we address only QWs on Hilbert space that are spanned by the vertices of the graph. The following models are included in this analysis: (1)~the flip-flop coined QW models with coins that are orthogonal reflections, (2)~Szegedy's QW model, and (3)~the staggered QW model.

Let us start with an example of a SQW that is included neither in Szegedy's model nor in the coined model. The recipe to build the SQW on the graph of Fig.~\ref{graph1.pdf} is as follows.

\begin{figure}[h!] 
\centering
\includegraphics[scale=0.60]{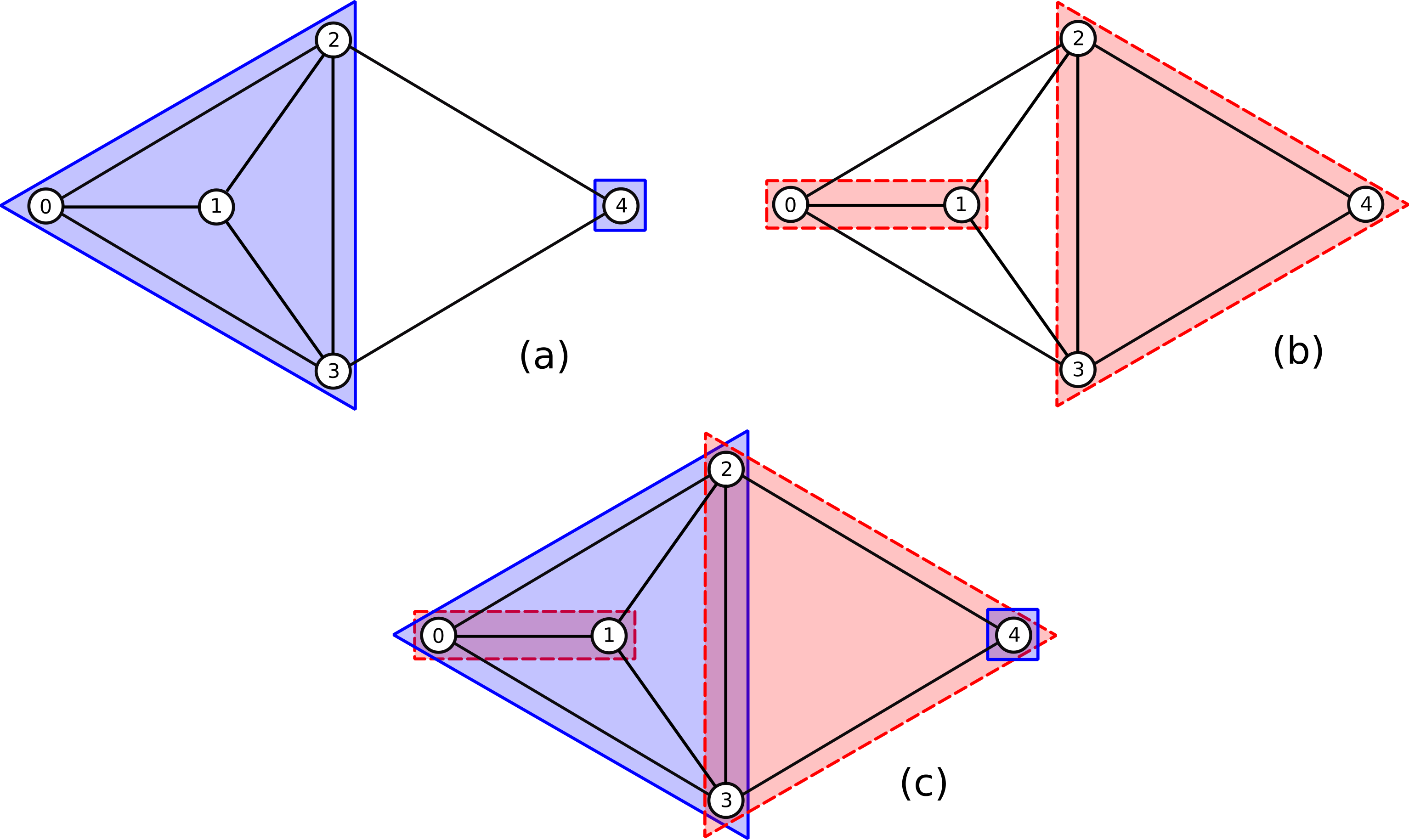}
\caption{Procedure to define a SWQ on a graph. Panel~(a) describes the blue tessellation. Panel~(b) describes the red tessellation. Panel~(c) presents the tessellation union, which must cover all edges.} 
\label{graph1.pdf}
\end{figure}

\textit{Step 1.} Make a partition of the vertices so that each element of the partition is a \textit{clique}. An element of the partition is called a polygon. The polygons do not overlap and their union contains all vertices. The polygon union is called a tessellation and we use the blue color for the first graph tessellation. Notice that some edges are not in the blue tessellation, as we can check in Fig.~\ref{graph1.pdf}(a). A polygon is always a clique, but not necessarily a \textit{maximal clique}. An edge is in a polygon if and only if the endpoints of the edge are in the polygon.

\textit{Step 2.} Associate a unit vector to each polygon so that the vector belongs to the subspace spanned by the vertices of the polygon. The simplest choice is the uniform superposition given by 
\begin{align*}
	&\ket{\alpha_0} = \frac{1}{2}\left( \ket{0}+\ket{1}+\ket{2}+\ket{3}\right), \\
	&\ket{\alpha_1} = \ket{4},
\end{align*}
where $\ket{\alpha_0}$ is associated with (or induces) the blue polygon in the form of a triangle in Fig.~\ref{graph1.pdf}(a) and $\ket{\alpha_1}$ is associated with (or induces) the square. Any other choice so that $\ket{\alpha_0}$ and $\ket{\alpha_1}$ have no zero entry and unit $l^2$-norm is acceptable.  Now we are ready to define the first local unitary operator, which has the following expression
\begin{equation}
%U_0\,=\,2\sum_{j=0}^1 \ket{\alpha_j}\bra{\alpha_j} - I
U_0\,=\,2\,\ket{\alpha_0}\bra{\alpha_0} +2\,\ket{\alpha_1}\bra{\alpha_1}- I.
\end{equation}
By construction, $U_0$ is unitary and Hermitian ($U_0^2=I$) because $\braket{\alpha_j}{\alpha_{j'}}=\delta_{jj'}$ for $0\le j,j'\le 1$. $U_0$ is local because the particle does not leave the polygon. Since a polygon is a clique, the particle can move only to neighboring vertices because $\braket{\alpha_j}{j'}=0$ if $j'$ is not a vertex of the polygon induced by $\alpha_j$ and, if $j'$ is a vertex in $\alpha_j$, $U_0\ket{j'}$ belongs to the subspace spanned by the vertices of $\alpha_j$.

\textit{Step 3.} Now we are going to make a second vertex partition in order to cover the edges that were not included in the first tessellation. Fig.~\ref{graph1.pdf}(b) shows that this task is doable for this graph. In the generic case, we may need to use more than two tessellations.

\textit{Step 4.} Similar to Step 2, we associate a unit vector in the subspace spanned by the polygon vertices to each polygon. Again, we use the uniform superposition 
\begin{align*}
	&\ket{\beta_0} = \frac{1}{\sqrt 2}\left(\ket{0}+\ket{1}\right),\\
	&\ket{\beta_1} = \frac{1}{\sqrt 3}\left( \ket{2}+\ket{3}+\ket{4}\right). 
\end{align*}
The second local unitary operator is
\begin{equation}
%U_1\,=\,2\sum_{j=0}^1 \ket{\beta_j}\bra{\beta_j} - I.
U_1\,=\,2\,\ket{\beta_0}\bra{\beta_0} +2\,\ket{\beta_1}\bra{\beta_1}- I.
\end{equation}

\textit{Step 5.} Since we have covered all edges of the graph, as we can check in Fig.~\ref{graph1.pdf}(c), the evolution operator is given by
\begin{equation}
U\,=\,U_1 U_0= 
\frac{1}{6}\left[ 
\begin{array}{ccccc} 
\,\,\,\,\,3&-3&3&3&0\\ 
-3&3&3&3&0\\ 
\,\,\,\,\,1&1&3&-3&4\\ 
\,\,\,\,\,1&1&-3&3&4\\ 
\,\,\,\,\,4&4&0&0&-2
\end{array} 
\right] 
.
\end{equation}

\

Now we can generalize this construction for a generic simple undirected graph $\Gamma$, whose edges can be covered by two tessellations called $\alpha$ and $\beta$ \footnote{We use the following convention in this work: Polygons of tessellation $\alpha$ are surrounded by continuous boundaries and filled with transparent blue color. Polygons of tessellation $\beta$ are surrounded by dashed boundaries and filled with transparent red color. The color choices play no relevant role.}. The evolution operator is
\begin{equation}\label{U}
    U \,=\, U_1 U_0,
\end{equation}
where
\begin{eqnarray}
% \nonumber to remove numbering (before each equation)
  U_0 &=& 2\sum_{k=0}^{m-1} \ket{\alpha_k}\bra{\alpha_k} - I, \label{U_0}\\
  U_1 &=& 2\sum_{k=0}^{n-1} \ket{\beta_k}\bra{\beta_k} - I, \label{U_1}
\end{eqnarray}
and $m$ and $n$ are the number of polygons in each tessellation, and
\begin{eqnarray}
% \nonumber to remove numbering (before each equation)
  \ket{\alpha_k} &=&  \sum_{k'\in \alpha_k} a_{k\,k'} \ket{k'}, \label{alpha_k} \\
  \ket{\beta_k} &=&  \sum_{k'\in \beta_k} b_{k\,k'} \ket{k'}, \label{beta_k}
\end{eqnarray}
where $a_{k,k'}$ are nonzero complex amplitudes for $k'\in \alpha_k$, which means that if $k'$ is a vertex of the polygon induced by vector $\ket{\alpha_k}$ then $a_{k,k'}\neq 0$ otherwise  $a_{k,k'}=0$;  likewise $b_{k,k'}$  are nonzero complex amplitudes for $k'\in \beta_k$ and zero otherwise. Index $k'$ in $a_{kk'}$ and $b_{kk'}$ runs from 0 to $N-1$, where $N$ is the number of vertices of $\Gamma$.

Formally, a tessellation is a partition of the graph into \textit{cliques}, that is, each element of the partition is a clique and two elements of the partition cannot have a vertex in common. An element of the partition is called a polygon. We can associate unitary and Hermitian operators with a tessellation of $m$ polygons using the form of the operator given by Eqs.~(\ref{U_0}) and~(\ref{alpha_k}). A unitary and Hermitian operator is called an \textit{orthogonal reflection} if it is associated with (or induces) a tessellation. A SQW is defined by an evolution operator that is a product of orthogonal reflections such that the union of the tessellations induced by the orthogonal reflections covers the edges of the graph~\cite{PSFG15,Por16}.

An interesting class of orthogonal reflections is obtained using polygons in uniform superposition. In this case, Eq.~(\ref{alpha_k}) reduces to
\begin{equation}
\ket{\alpha_k} \,=\, \frac{1}{\sqrt{|\alpha_k|}} \sum_{k'\in \alpha_k} \ket{k'},
\end{equation}
where $|\alpha_k|$ is the number of vertices in polygon $\alpha_k$. This class generalizes flip-flop DTQWs with the Grover coin in the sense that, when we convert the internal degrees of freedom into extra vertices, the extended graph on which the Grover walk is defined is a \textit{line graph} of a \textit{bipartite graph} while the SQW can be defined on a wider class.

\section{Main classes of graphs}\label{sec3}

In this section, we use graph theory to classify the most relevant classes of graphs that help to identify SQWs that can be reduced into the Szegedy or coined model.  Since our focus in this work is the set of 2-tessellable SQWs (those that have an evolution operator that is the product of exactly two orthogonal reflections), we start addressing the following question: Which graphs are 2-tessellable? We give a necessary condition. Each vertex of the graph must belong to at most two \textit{maximal cliques}. For example, graph $\Gamma_9$ in Fig.~\ref{fig:app4} has a central vertex that belongs to the intersection of five \textit{maximal cliques}. To define a SQW on this graph we need to employ at least five tessellations. By inspection, we can check that only graphs $\Gamma_2$, $\Gamma_3$, $\Gamma_4$, $\Gamma_5$, and $\Gamma_6$ are 2-tessellable and those graphs obey the necessary condition.  This condition is not sufficient. For instance, each vertex of $\Gamma_7$ belongs to two \textit{maximal cliques}, but we need to employ at least three tessellations to define a SQW on this graph.

\begin{figure}[h!] 
\centering
\includegraphics[scale=0.45]{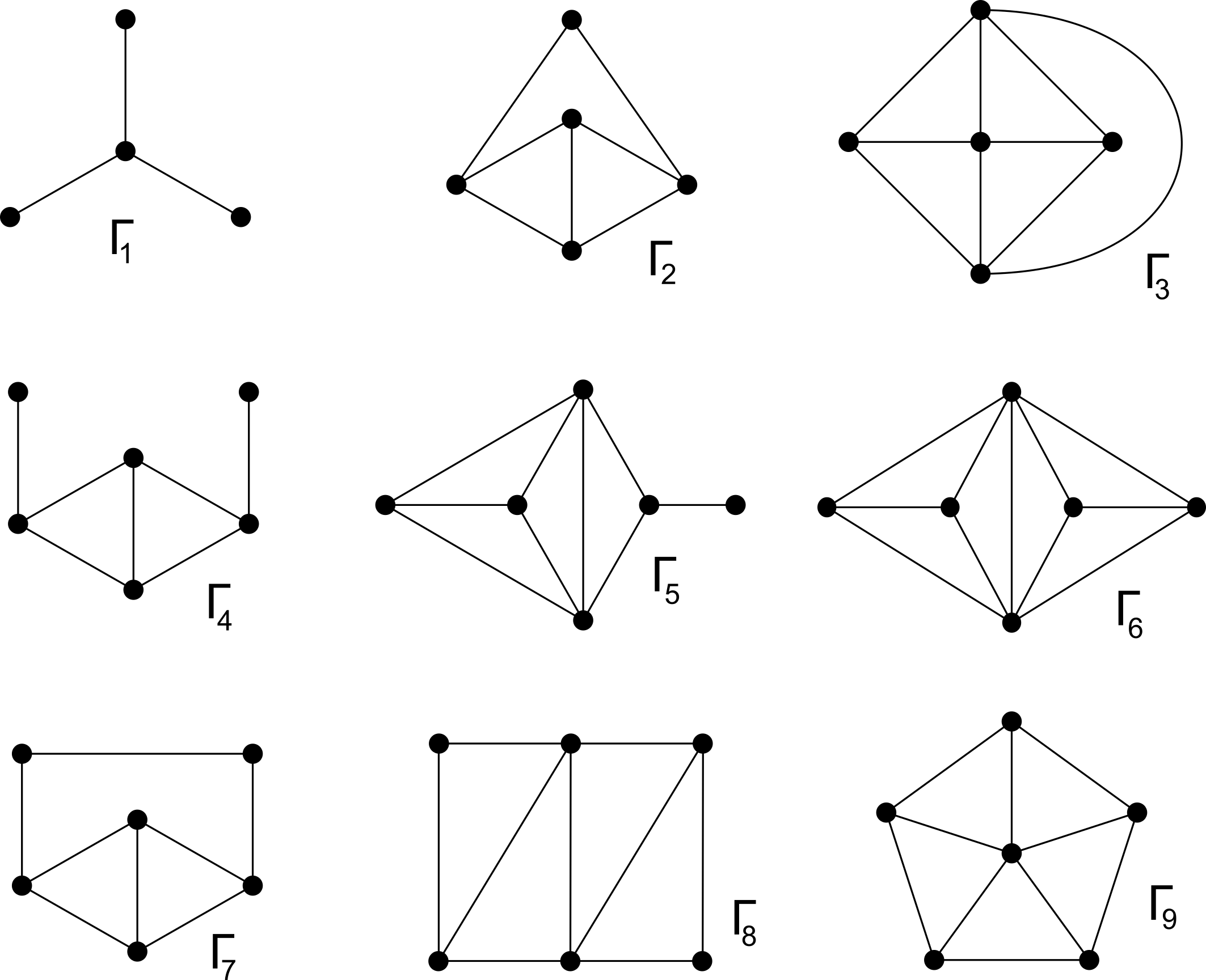}
\caption{Nine forbidden Beineke \textit{induced subgraphs}.}
\label{fig:app4}
\end{figure}

To definitively answer the question, we need to use the \textit{clique graph} of the original graph because the \textit{clique graph} contains all information about the adjacency relation of the \textit{maximal cliques} of the original graph. The necessary and sufficient condition is the \textit{clique graph} must be 2-\textit{colorable}. To prove this we need the following lemma:
\begin{lemma}\label{lemma1}
Each \textit{maximal clique} of a $2$-tessellable graph is inside a polygon (blue or red). 
\end{lemma}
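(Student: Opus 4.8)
The plan is to argue by contradiction, using the structural constraints imposed by the definition of a tessellation: each polygon is a clique, polygons in a single tessellation are vertex-disjoint, and the union of the two tessellations must cover every edge of the graph. Suppose $C$ is a maximal clique of a $2$-tessellable graph $\Gamma$ that is not contained in any single polygon of the blue tessellation $\alpha$ or the red tessellation $\beta$. Then in particular $C$ is not a subset of any blue polygon, so there exist two vertices $u,v\in C$ lying in different blue polygons; similarly there exist two vertices $x,y\in C$ lying in different red polygons.

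First I would show that the edge $uv$ must then be covered by the red tessellation, forcing $u$ and $v$ into the \emph{same} red polygon. More generally, consider the partition that $\alpha$ induces on the vertex set of $C$: each block is $C\cap(\text{some blue polygon})$. If this partition has at least two nonempty blocks, pick vertices in two different blocks; the edge between them is not covered by $\alpha$, hence it is covered by $\beta$, so those two vertices share a red polygon. The key combinatorial step is to push this far enough: I want to conclude that \emph{all} of $C$ lies in one red polygon (or symmetrically, all of $C$ lies in one blue polygon), which contradicts the assumption. The cleanest route is to show that the $\alpha$-partition of $C$ has exactly one or exactly two blocks, and likewise for $\beta$, and then to play the two two-block partitions against each other: because every edge of $C$ must be covered, whenever two vertices of $C$ are separated by the $\alpha$-partition they must be together in the $\beta$-partition, and vice versa. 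Since $C$ is a clique, \emph{every} pair in $C$ is an edge, so the $\alpha$-partition and $\beta$-partition of $C$ must be ``complementary'' in the sense that no pair is separated by both. If both partitions were nontrivial (at least two blocks each), one can exhibit vertices $a,b,c\in C$ with $a,b$ in different $\alpha$-blocks and $b,c$ in different $\beta$-blocks; then $a,b$ share a red polygon and $b,c$ share a blue polygon, and a short case analysis on where $a$ and $c$ sit (using that a vertex lies in at most one polygon per tessellation) yields a vertex forced into two distinct polygons of the same tessellation, which is impossible.

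I expect the main obstacle to be handling the bookkeeping of which vertex lies in which polygon without getting lost in cases: the statement is ``obviously true'' pictorially, but a rigorous argument has to use repeatedly that (i) a vertex belongs to at most one polygon of a given tessellation, (ii) an edge is covered iff both endpoints lie in a common polygon of one of the two tessellations, and (iii) $C$ being a clique means all $\binom{|C|}{2}$ pairs are edges. A convenient way to organize this is: if the $\alpha$-partition of $C$ is trivial we are done immediately (all of $C$ is in one blue polygon); otherwise it has a block $B_1$ and at least one more vertex outside $B_1$ in $C$, and I would show every vertex of $C\setminus B_1$ must lie in a single common red polygon $R$, then that every vertex of $B_1$ with more than one element also forces a shared red polygon overlapping $R$, collapsing everything into one red polygon. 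The finiteness and disjointness of polygons within a tessellation make each of these collapses forced. Once $C\subseteq R$ for a red polygon $R$ (or $C\subseteq P$ for a blue polygon), the lemma follows. $\qed$
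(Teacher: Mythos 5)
Your proof is correct and follows essentially the same route as the paper's: since every pair of vertices of a clique is an edge that must be covered by one of the two tessellations, a maximal clique that is split by the blue tessellation is forced entirely into a single red polygon (your three-vertex contradiction for the case where both induced partitions are nontrivial makes rigorous what the paper states more tersely). The only misstep is the unnecessary intermediate claim that the $\alpha$-partition of $C$ has at most two blocks --- this can fail when the $\beta$-partition of $C$ is trivial (e.g.\ a triangle tessellated by three singleton blue polygons and one covering red polygon) --- but your final argument never uses it, so the proof stands.
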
 
\begin{proof}
Suppose that the graph is a clique. We state that one tessellation must have a polygon that covers the whole clique. In fact, if a blue polygon does not cover the whole clique, the missed vertices are \textit{adjacent} to all vertices of that blue polygon. The red polygon must cover all edges of the whole clique because, otherwise, there will be at least one edge whose endpoints belong to polygons with different colors. This edge will not be in the tessellation union violating the definition of 2-tessellable SQWs. Then, all vertices of the whole clique must be in the red polygon. The same argument is valid for each \textit{maximal clique} of a generic 2-tessellable graph.  \qed
\end{proof}

\begin{proposition}\label{prop:2tess}
A connected graph is $2$-tessellable if and only if its \textit{clique graph} is $2$-\textit{colorable}.
\end{proposition}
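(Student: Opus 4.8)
The plan is to use Lemma~\ref{lemma1} as the bridge between the combinatorics of tessellations and a proper $2$-coloring of the \textit{clique graph}, and to prove the two implications separately. Throughout I will use that the \textit{clique graph} has the \textit{maximal cliques} of $\Gamma$ as its vertices, with two of them adjacent precisely when they share a vertex of $\Gamma$; that a tessellation is by definition a partition of the vertex set into \textit{cliques}, so two polygons of the same tessellation are vertex-disjoint; and the elementary remark that if a \textit{maximal clique} $C$ is contained in a polygon $P$, then $C=P$, since $P$ is itself a clique and $C$ is maximal.

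For the forward direction, suppose $\Gamma$ is $2$-tessellable, with a blue and a red tessellation whose union covers every edge. By Lemma~\ref{lemma1}, every \textit{maximal clique} of $\Gamma$ lies inside a polygon, hence (by the remark above) equals a blue polygon or equals a red polygon. Color each \textit{maximal clique} blue if it equals a blue polygon, and red otherwise, so that in the remaining case it equals a red polygon. To check this coloring is proper, take distinct \textit{maximal cliques} $C_1\neq C_2$ adjacent in the \textit{clique graph}, i.e.\ sharing a vertex $v$. If both were colored blue, they would be two distinct blue polygons both containing $v$, contradicting vertex-disjointness of the blue polygons; if both were colored red, then by the coloring rule and Lemma~\ref{lemma1} they would be two distinct red polygons both containing $v$, again a contradiction. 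Hence adjacent \textit{maximal cliques} receive different colors, so the \textit{clique graph} is $2$-\textit{colorable}.

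For the converse, suppose the \textit{clique graph} of $\Gamma$ admits a proper $2$-coloring; this splits the \textit{maximal cliques} of $\Gamma$ into two families $\mathcal{B}$ and $\mathcal{R}$, each consisting of pairwise vertex-disjoint \textit{maximal cliques}. These families need not cover all vertices, so I would enlarge each into a genuine tessellation by adjoining as singleton polygons the vertices it misses: let the blue tessellation be $\mathcal{B}$ together with $\{v\}$ for each $v$ not covered by $\mathcal{B}$, and similarly for the red tessellation from $\mathcal{R}$. Since singletons are \textit{cliques} and are disjoint both from one another and from the \textit{maximal cliques} already chosen, each of these is a legitimate partition of the vertices into \textit{cliques}. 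Finally, any edge of $\Gamma$ lies in some \textit{maximal clique} $C$, which belongs to $\mathcal{B}$ or $\mathcal{R}$ and is therefore a polygon of the corresponding tessellation containing both its endpoints, so that edge is covered; hence the union of the two tessellations covers all edges. Equipping the polygons with, say, uniform-superposition vectors then gives two \textit{orthogonal reflections} whose product is the evolution operator of a SQW, so $\Gamma$ is $2$-tessellable. (Connectedness is not really needed for either implication and is assumed only to match the setting of the paper.)

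The step needing the most care is the forward direction: one must make sure the coloring rule is unambiguous even when a \textit{maximal clique} coincides simultaneously with a blue polygon and a red polygon, and the real content is that vertex-disjointness of polygons within one tessellation is exactly what rules out two same-colored adjacent \textit{maximal cliques}. This is precisely where Lemma~\ref{lemma1} is essential, since it guarantees that every \textit{maximal clique} is a polygon of at least one color, which is what makes the two-way correspondence between tessellations and $2$-colorings possible.
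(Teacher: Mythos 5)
Your proof is correct and follows essentially the same route as the paper: for necessity you use Lemma~\ref{lemma1} to identify each \textit{maximal clique} with a polygon and read off a proper $2$-coloring from the disjointness of polygons within a tessellation (the paper phrases this as erasing non-maximal polygons and letting the rest induce the coloring), and for sufficiency you turn the two color classes into two tessellations completed by extra polygons (you choose singletons; the paper leaves the completion unspecified). Your write-up is slightly more explicit about the well-definedness of the coloring and absorbs the $1$-colorable case uniformly, but the underlying argument is the same.
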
 
\begin{proof}
Let us start with the sufficiency. If the \textit{clique graph} is 1-colorable, the original graph is a clique, which is 2-tessellable. If the \textit{clique graph} is 2-colorable, we can use the coloration induced by the \textit{clique graph} on the original graph, which generates a partial tessellation of the original graph with two colors (some of the vertices belong to two different-colored polygons), and then we complete the partial tessellations by defining new polygons that will cover the remaining vertices (those that do not belong to the intersection of the induced colorations). In the end, all vertices belong to two different-colored polygons. This proves the sufficiency.
 
To prove the necessity, we take the original graph with two tessellations and erase all polygons that are not maximal cliques. The union of the remaining polygons still covers all vertices of the original graph because each vertex belongs to a maximal clique (using the above lemma here). The union of the remaining polygons also covers all edges and is a 2-colorable \textit{clique cover}. Those remaining polygons induce a 2-coloration of the \textit{clique graph} unless the whole graph is a clique whose \textit{clique graph} is 1-colorable. \qed
\end{proof}

This proposition shows that graph $\Gamma_7$ in Fig.~\ref{fig:app4} is not 2-tessellable because the \textit{clique graph} of $\Gamma_7$ is a pentagon, which is not 2-colorable. 

It seems that Proposition~\ref{prop:2tess} cannot be easily extended for graphs that require more than two tessellations. Take for example the Haj\'os graph (a triangle surrounded by three triangles each one sharing an edge in common with the central triangle). This graph is 3-tessellable and its clique graph is not 3-colorable~\cite{Szw03}. Besides, it is possible to have a maximal clique that does not belong to any polygon. 

%Those cases also have 2-colorable \textit{clique graph}. 

\subsection*{Class 1\,\, Graphs that are not line graphs}

The nine forbidden Beineke \textit{induced subgraphs} of Fig.~\ref{fig:app4} are used to check whether a graph is a \textit{line graph}. Beineke~\cite{Bei70} has proved the following theorem:
\begin{theorem}
Let $\Gamma'$ be a graph. There exists a graph $\Gamma$ such that $\Gamma'$ is the line graph of $\Gamma$ if and only if $\Gamma'$ contains no graph of Fig.~\ref{fig:app4} as an \textit{induced subgraph}.
\end{theorem}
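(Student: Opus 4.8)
\emph{Proof plan.} Only the forward implication requires real work; the converse is a closure property, and the plan there is as follows. If $\Gamma' = L(\Gamma)$ and $H$ is an induced subgraph of $\Gamma'$ on a vertex set $S$, then $H = L(\Gamma_S)$, where $\Gamma_S$ is the subgraph of $\Gamma$ spanned by the edges named by the elements of $S$; hence every induced subgraph of a line graph is again a line graph. It then suffices to verify, once and for all, that none of the nine graphs in Fig.~\ref{fig:app4} is itself a line graph. For $K_{1,3}$ this is a pigeonhole argument: the three edges of $\Gamma$ meeting a would-be ``central'' edge $xy$ each contain $x$ or $y$, so two of them share an endpoint and are therefore adjacent in $L(\Gamma)$, contradicting that the three leaves of $K_{1,3}$ are pairwise nonadjacent; the remaining eight are disposed of by similar finite checks. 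So no line graph contains any of the nine as an induced subgraph, which is the easy direction.

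For the hard direction the plan is to reduce to Krausz's criterion: a graph is a line graph if and only if its edge set can be partitioned into complete subgraphs (``cliques'') so that every vertex lies in at most two of them. Granting such a family $\mathcal{C}$, one reconstructs a witness $\Gamma$ by setting $V(\Gamma)=\mathcal{C}\cup\{w_v : v\text{ lies in exactly one member of }\mathcal{C}\}$, joining $C,C'\in\mathcal{C}$ exactly when they share a vertex of $\Gamma'$, and joining $C$ to $w_v$ when $v\in C$ only; a direct check gives $L(\Gamma)\cong\Gamma'$. Thus the task becomes: produce a Krausz family for $\Gamma'$ from the hypothesis that $\Gamma'$ omits all nine Beineke graphs as induced subgraphs. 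I would dispatch the trivial cases ($\Gamma'$ a single vertex, a single edge, or a triangle --- all line graphs by inspection) separately and otherwise work componentwise, assuming $\Gamma'$ connected with at least one edge.

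The construction of the Krausz family I would drive from the local structure of neighbourhoods. Absence of $K_{1,3}$ forces the neighbourhood $N(v)$ of every vertex to have independence number at most two. The remaining eight forbidden subgraphs then do two further jobs: first, they force each $N(v)$ to be a union of at most two cliques (so that the edges incident to $v$ split canonically into at most two ``bundles''); and second, they force the local bundlings chosen at the two endpoints of an edge to be mutually compatible, so that a bundle at $u$ and the bundle at $v$ containing the same edge $uv$ actually coincide as a clique of $\Gamma'$, no vertex is driven into three cliques, and distinct cliques meet in at most one vertex. Carrying this out vertex by vertex --- most cleanly by induction on $|V(\Gamma')|$, attaching one vertex at a time and using the forbidden subgraphs to constrain how it meets the cliques already built --- should yield a clique family covering each edge exactly once and each vertex at most twice, i.e.\ the required Krausz partition.

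The step I expect to be the main obstacle is exactly this global gluing. Claw-freeness by itself gives only a non-canonical clique cover of each neighbourhood, and a priori there is nothing to guarantee that the covers picked at adjacent vertices agree along their shared edge, or that two chosen cliques intersect in a single vertex. It is precisely the eight Beineke graphs other than $K_{1,3}$ --- each a minimal certificate of one flavour of inconsistency (two triangles sharing an edge whose opposite apexes misbehave, a ``book'' that would force a vertex into three cliques, and so on) --- whose simultaneous exclusion makes all the local bundles fit together, and the genuine content of the theorem is the organised (and somewhat lengthy) finite case analysis showing that every potential inconsistency necessarily exhibits one of the nine as an induced subgraph. Uniqueness of $\Gamma$ is not needed here; by Whitney's theorem $\Gamma$ is determined by $\Gamma'$ except when $\Gamma'=K_3$, which is both $L(K_3)$ and $L(K_{1,3})$.
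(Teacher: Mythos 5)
The paper does not actually prove this statement: it is Beineke's classical characterization, quoted verbatim with a citation to Beineke (1970), so there is no in-paper argument to compare yours against. Judged on its own terms, your proposal handles the easy direction correctly and completely: induced subgraphs of line graphs are line graphs (restrict $\Gamma$ to the edges named by the chosen vertices of $\Gamma'$), and each of the nine Beineke graphs is individually not a line graph --- your pigeonhole argument for $K_{1,3}$ is the standard one, and the remaining eight are finite checks. Your reconstruction of a root graph from a Krausz family, including the pendant-vertex gadget $w_v$ for vertices lying in only one clique, is also sound; note that two distinct members of the family automatically meet in at most one vertex, since otherwise some edge would lie in two of them, contradicting the partition property.

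The genuine gap is the one you flag yourself: the forward implication. Reducing to Krausz's criterion is the right move, but the entire content of Beineke's theorem is the demonstration that a connected graph omitting all nine forbidden induced subgraphs admits a Krausz family, and your text only describes what that argument should accomplish (``the local bundles fit together,'' ``should yield a clique family'') without carrying it out. Excluding $K_{1,3}$ gives that every neighbourhood has independence number at most two, but identifying which of the other eight graphs (several of which are the graphs $\Gamma_2,\dots,\Gamma_9$ of Fig.~\ref{fig:app4}) certifies each failure mode --- a neighbourhood that is not a union of two cliques, incompatible bundlings at the two ends of an edge, a vertex forced into three cliques, two cliques sharing an edge --- is precisely the lengthy combinatorial verification that constitutes the proof, and none of it is present. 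As written, the hard direction is a road map rather than a proof; since the paper itself defers to the literature at this point, the honest fix is either to execute the full case analysis or to cite Beineke as the paper does.
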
 
%Graph $\Gamma_1$ of Fig.~\ref{fig:app4} is called a claw. A trivial consequence of Beineke's theorem is the following
%\begin{corollary}
%A line graph is claw-free.
%\end{corollary}

If we want to define a 2-tessellable SQW on a graph that is not a line graph (Class~1), our graph on the one hand cannot have $\Gamma_1$, $\Gamma_7$, $\Gamma_8$, and $\Gamma_9$ as an \textit{induced subgraph} (their \textit{clique graphs} are not 2-colorable) and on the other hand it must have $\Gamma_2$, $\Gamma_3$, $\Gamma_4$, $\Gamma_5$, or $\Gamma_6$ as an \textit{induced subgraph} (Beineke's theorem). 

The 2-tessellable SQWs on graphs in Class~1 have the following property: There are necessarily one or more edges in the intersection of the tessellations. This follows from the fact that $\Gamma_2$, $\Gamma_4$, $\Gamma_5$, and $\Gamma_6$ have \textit{maximal cliques} sharing an edge and  $\Gamma_3$  have \textit{maximal cliques} sharing three edges. Take $\Gamma_6$ for instance, it comprises two 4-cliques having in common the central vertical edge.

SQWs on graphs in Class~1 can be included neither in Szegedy's model nor in the flip-flop coined model, because Refs.~\cite{PSFG15,Por16} have shown that any Szegedy's QW or flip-flop coined walk using orthogonal reflections are equivalent to SQWs on line graphs of the bipartite graphs. 
%%%Non-isomorphic graphs induce unitary operators with different algebraic expressions.

\subsection*{Class 2a\,\, Line graphs of nonbipartite graphs}

To characterize this class of graphs we use the Krausz partition.
\begin{definition}
A Krausz partition is a collection $C$ of subgraphs of a graph $\Gamma$ that satisfies the following items: (1)~each element of $C$ is a clique, (2) each edge of $\Gamma$ is in exactly one element of $C$, and (3)~each vertex is in exactly two elements of $C$.
\end{definition}
For example, Fig.~\ref{fig:app2} shows the Krausz partition of graph $\Gamma'$ (the same as Fig.~\ref{graph1.pdf}). Krausz~\cite{Kra43} has proved the following theorem:
\begin{theorem}
A graph $\Gamma'$ is a line graph of some graph $\Gamma$ if and only if $\Gamma'$ has a Krausz partition.
\end{theorem}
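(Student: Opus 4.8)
The plan is to prove both implications by exhibiting explicit constructions that are essentially inverse to one another, identifying throughout the vertex set of a line graph $L(\Gamma)$ with the edge set of $\Gamma$. For the forward direction, assume $\Gamma' = L(\Gamma)$ for some simple graph $\Gamma$. For each vertex $v$ of $\Gamma$, let $C_v$ be the set of edges of $\Gamma$ incident to $v$, regarded as an \textit{induced subgraph} of $\Gamma'$; the candidate Krausz partition is the collection of the nonempty subgraphs among the $C_v$, $v\in V(\Gamma)$. For the backward direction, given a Krausz partition $C$ of $\Gamma'$, I would build $\Gamma$ by setting $V(\Gamma)=C$ and, for each vertex $w$ of $\Gamma'$, which by axiom~(3) lies in exactly two members $C_i,C_j$ of $C$, adding the edge $e_w=\{C_i,C_j\}$ to $\Gamma$; then I would check that $L(\Gamma)\cong\Gamma'$.

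For the forward direction, axiom~(1) holds because any two edges of $\Gamma$ through a common vertex $v$ are adjacent in $L(\Gamma)$, so $C_v$ induces a \textit{clique} in $\Gamma'$. Axiom~(3) holds because an edge $e=\{u,v\}$ of $\Gamma$, viewed as a vertex of $\Gamma'$, lies in $C_u$ and $C_v$ and in no other $C_x$, with $u\ne v$ since $\Gamma$ is simple. Axiom~(2) holds because an edge of $\Gamma'$ joins two distinct edges $e,f$ of $\Gamma$ sharing an endpoint; since $\Gamma$ is simple they share exactly one endpoint $v$, so this edge of $\Gamma'$ lies in $C_v$ and in no other $C_x$. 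The only nuisances are degenerate and inessential: isolated vertices of $\Gamma$ produce empty subgraphs, which are omitted, and a component of $\Gamma$ equal to $K_2$ forces the same one-vertex subgraph of $\Gamma'$ to be listed twice, which is accommodated by allowing the collection $C$ to have a repeated element (or by the usual convention of adjoining trivial one-vertex cliques).

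For the backward direction, the key preliminary observation is that two distinct members $C_i,C_j$ of a Krausz partition share at most one vertex: if they shared two vertices $w,w'$, then the edge $\{w,w'\}$ of $\Gamma'$ would belong to both $C_i$ and $C_j$, since each is a complete subgraph, contradicting axiom~(2). Hence $w\mapsto e_w$ is well defined and injective from $V(\Gamma')$ to $E(\Gamma)$, and it is surjective onto $E(\Gamma)$ by construction, so it identifies the vertices of $L(\Gamma)$ with those of $\Gamma'$. It then remains to match adjacency. If $w\sim w'$ in $\Gamma'$, the edge $\{w,w'\}$ lies in a unique member $C_i$ by axiom~(2), so $w,w'\in C_i$, whence $e_w$ and $e_{w'}$ both have $C_i$ as an endpoint and are adjacent in $L(\Gamma)$. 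Conversely, if $e_w$ and $e_{w'}$ share an endpoint $C_i$ in $\Gamma$, then $w,w'\in C_i$, and since $C_i$ is a \textit{clique} of $\Gamma'$ and $w\ne w'$ we get $w\sim w'$ in $\Gamma'$. This establishes $L(\Gamma)\cong\Gamma'$.

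I expect the main obstacle to be bookkeeping rather than conceptual depth: one must verify that $w\leftrightarrow e_w$ is a genuine bijection, which is exactly the step where the ``at most one common vertex'' lemma, and hence axiom~(2), is essential, and then check the adjacency equivalence in both directions, all while keeping the degenerate cases ($K_2$ components, isolated vertices, and the non-uniqueness of $\Gamma$ reflected in $L(K_3)=L(K_{1,3})=K_3$) from obscuring the argument. Everything else is a routine check that the two constructions respect the three axioms and are mutually inverse up to isomorphism.
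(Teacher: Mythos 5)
Your proof is correct. Note that the paper itself offers no proof of this statement: it is quoted as a classical theorem and attributed to Krausz (1943), so there is no in-paper argument to compare against. What you have written is the standard proof of Krausz's characterization, and both directions are sound: the forward construction $v\mapsto C_v$ (edges incident to $v$) verifies the three axioms using only simplicity of $\Gamma$, and the backward construction correctly recovers a root graph once you establish the key observation that two members of a Krausz partition meet in at most one vertex --- which is exactly what guarantees that $w\mapsto e_w$ is injective and that the constructed $\Gamma$ is simple rather than a multigraph. You are also right to flag the degenerate cases: with the paper's axiom~(3) stated as ``exactly two elements,'' an isolated $K_2$ component of $\Gamma$ genuinely forces either a repeated element in the collection or an adjoined trivial one-vertex clique, and your handling of this (together with discarding empty $C_v$ for isolated vertices) is the standard fix. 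The only stylistic remark is that the non-uniqueness of the root graph ($L(K_3)\cong L(K_{1,3})$) that you mention at the end is not needed for either implication, since the theorem only asserts existence; but acknowledging it does no harm.
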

Krausz's theorem is an alternative way to check whether a graph is a line graph. The advantage of using the Krausz partition is that it shows how to obtain the \textit{root graph}. Each element of the Krausz partition is associated with a vertex of the root graph. Two vertices of the root graph are adjacent if and only if the intersection of the corresponding elements of the partition is nonempty. On the other hand, it is known that a generic graph is \textit{bipartite} if and only it is 2-colorable. Then, the line graph of a graph $\Gamma$ has a 2-colorable Krausz partition if and only if $\Gamma$ is \textit{bipartite}.

\begin{figure}[h!] 
\centering
\includegraphics[scale=0.30]{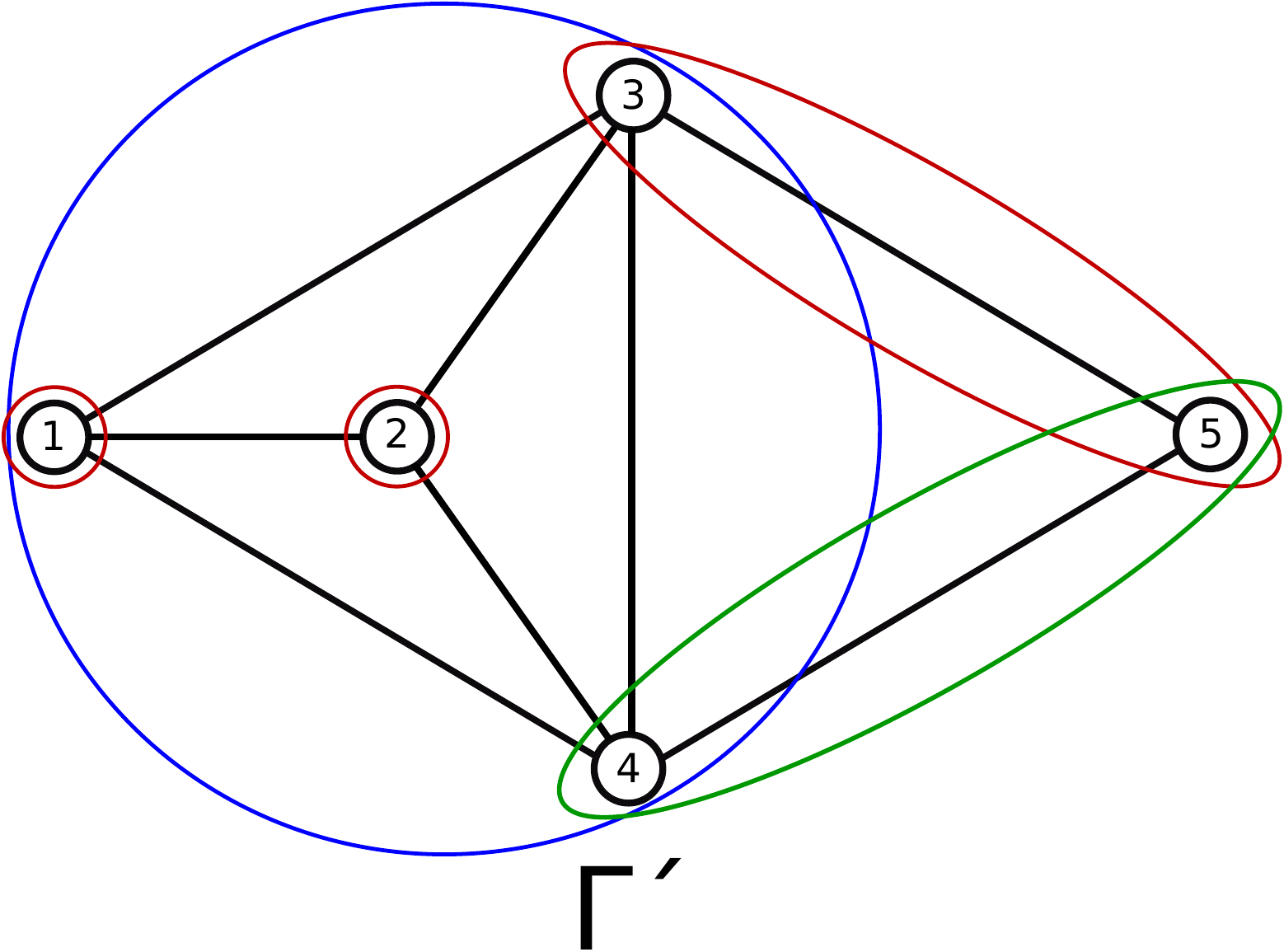}\hspace{0.3cm}
\includegraphics[scale=0.30]{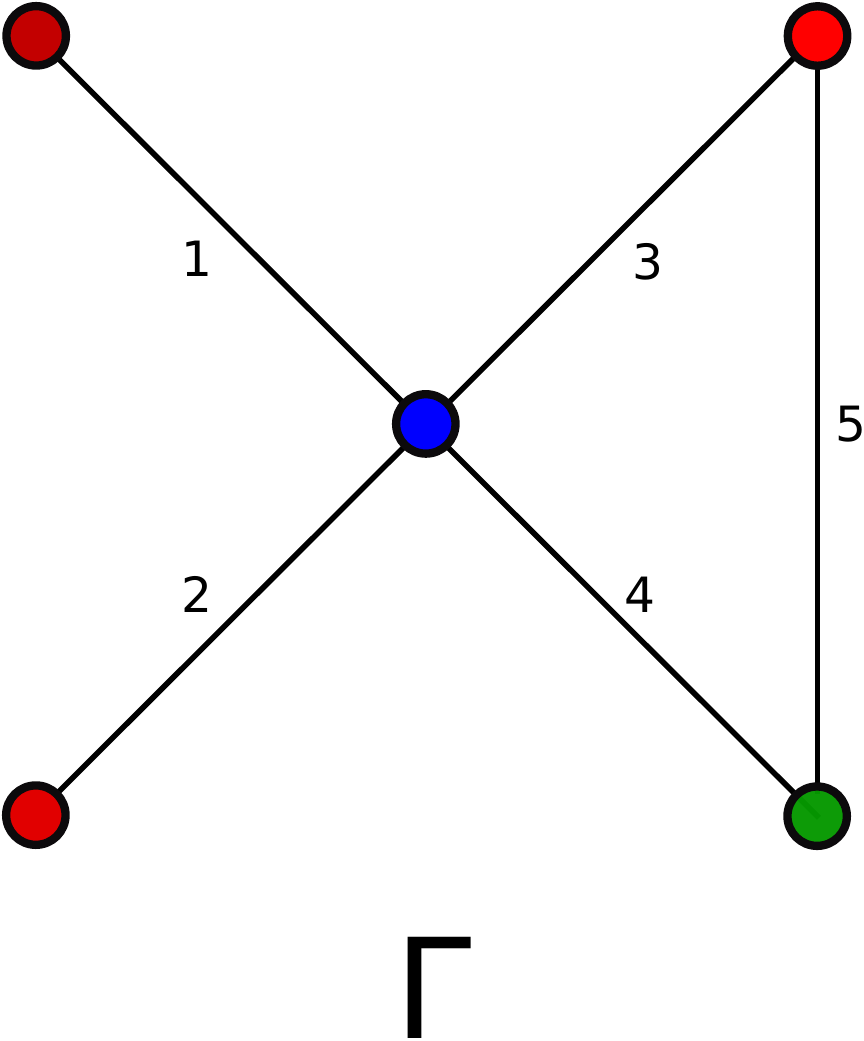}
\caption{A \textit{line graph} $\Gamma'$ with the Krausz partition and its \textit{root graph} $\Gamma$. There is an one-to-one map from the Krausz partition and the vertices of the root graph displayed by the colors. Notice that the Krausz partition is not 2-\textit{colorable} and $\Gamma$ is not \textit{bipartite}.}
\label{fig:app2}
\end{figure}

We state that the class of graphs that are line graphs of nonbipartite graphs is characterized by graphs that have Krausz partition that are not 2-colorable. If a graph of this class has a 2-colorable \textit{clique graph}, then the graph is 2-tessellable. The tessellation union cannot coincide with the Krausz partition because the Krausz partition is not 2-colorable. The intersection of two elements of the Krausz partition has no edge. If the graph is 2-tessellable, there must be an edge in a polygon intersection because the tessellation union splits the graph into \textit{maximal cliques} so that each vertex is in exactly two cliques. The only escape using the definition of Krausz partition is the following: There is an edge in the intersection of two polygons. This property is shared with graphs that are not line graphs (Class~1).

Besides the correspondence between the elements of the Krausz partition and the vertices of the root graph, there is another correspondence, coming from the definition of line graphs, between the edges of the root graph and the vertices of the line graph. For example, there is an one-to-one map between $E(\Gamma)$ and $V(\Gamma')$ of graphs $\Gamma$ and $\Gamma'$ of Fig.~\ref{fig:app2},  which is displayed by the numerical labels. 
%%%We will use this second correspondence to reduce some SQWs into Szegedy's model in Sec.~\ref{sec4}.

Summing up, we have proved so far that 2-tessellable SQWs on graphs in Class~1 or~2a have at least one edge in the tessellation intersection. Those SQWs cannot be cast into Szegedy's QW model because they are neither line graphs nor line graphs of bipartite graphs. Ref.~\cite{PSFG15} has shown that any Szegedy's QW is equivalent to a SQW on the line graph of a bipartite graph.

There are graphs in Class~2a that are not 2-tessellable, for instance, the Haj\'os graph.

\subsection*{Class 2b\,\, Line graphs of bipartite graphs} 

Class~2b is comprised by graphs that have 2-colorable Krausz partition. Another way to characterize graphs in this class is by using the following theorem (see for instance Ref.~\cite{Pet03}):
\begin{theorem}
$\Gamma$ is the line graph of a bipartite graph if and only if $\Gamma$ is diamond-free and the clique graph $K(\Gamma)$ is bipartite.
\end{theorem}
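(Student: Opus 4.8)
The plan is to prove the two directions separately, using Krausz's theorem for the sufficiency part and a direct analysis of the maximal cliques of a line graph for the necessity part, together with the rigid clique structure forced by diamond-freeness; throughout I assume $\Gamma$ connected, arguing componentwise otherwise. (Both hypotheses are genuinely needed: the diamond has a bipartite clique graph but is not diamond-free, while $C_5$ is diamond-free but has $K(C_5)=C_5$, and neither is the line graph of a bipartite graph.)

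\emph{Necessity.} Suppose $\Gamma=L(G)$ with $G$ bipartite, hence triangle-free. First I would check that $\Gamma$ is diamond-free: if $\{w_a,w_b,w_c,w_d\}$ induces a diamond in $L(H)$ with $w_cw_d$ the missing edge, the corresponding edges of $H$ satisfy $e_c\cap e_d=\emptyset$ while $w_a,w_b,w_d$ form a triangle in $L(H)$; choosing a vertex $v\in e_a\cap e_b$, at least one of $e_c,e_d$ avoids $v$, say $e_d$, and then $e_a,e_b,e_d$ bound a triangle of $H$ --- impossible since $G$ is triangle-free. Next, because $G$ is triangle-free the only maximal cliques of $L(G)$ are the star cliques $S_v=\{\,e\in E(G):v\in e\,\}$ with $\deg_G v\ge 2$ (the triangle-type maximal cliques of a line graph are absent). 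Two such cliques $S_u,S_v$ meet iff $uv\in E(G)$, and only in the single vertex $uv$ of $L(G)$; hence $K(\Gamma)$ is isomorphic to $G$ with its vertices of degree $\le 1$ deleted, which is again bipartite. The degenerate $G$ (a single edge, or isolated vertices) I would dispatch by hand.

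\emph{Sufficiency.} Assume $\Gamma$ connected, diamond-free, with $K(\Gamma)$ bipartite. Two facts drive the argument: (i) diamond-freeness is equivalent to the statement that every edge of $\Gamma$ lies in exactly one maximal clique and that distinct maximal cliques share at most one vertex; (ii) since $K(\Gamma)$ is bipartite it is triangle-free, so no vertex of $\Gamma$ lies in three maximal cliques, because three maximal cliques through a common vertex would be pairwise adjacent in $K(\Gamma)$. Combining (i) and (ii), the family consisting of all maximal cliques of $\Gamma$ together with a trivial one-vertex clique $\{v\}$ for each vertex $v$ that lies in only one maximal clique satisfies the three axioms of a Krausz partition; by Krausz's theorem $\Gamma=L(G)$, where $G$ is the root graph of this partition. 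It remains to see that $G$ is bipartite: the proper $2$-coloring of $K(\Gamma)$ already colors the maximal-clique members properly (adjacency in $K(\Gamma)$ being exactly ``sharing a vertex''), and each added trivial clique $\{v\}$ meets only its unique maximal clique and can be given the opposite color; this produces a proper $2$-coloring of the root graph, whose adjacency is precisely ``two partition members share a vertex'', so $G$ is bipartite.

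\textbf{Main obstacle.} The delicate part should be the bookkeeping in the sufficiency direction --- verifying that the augmented family is genuinely a Krausz partition (in particular that, once the trivial cliques are added, \emph{every} vertex lies in exactly two members, and that no edge is covered twice), and checking that the intersection relation used for the $2$-coloring coincides with the edge relation of the root graph appearing in Krausz's construction. The small cases need separate handling: an isolated vertex, a component that is a single $K_2$, and a component equal to $K_3=L(K_{1,3})$ (whose root $K_{1,3}$ is bipartite although $K_3$ is a clique). Finally one must remember that the earlier remark ``a line graph has a $2$-colorable Krausz partition iff its root graph is bipartite'' is being invoked for the specific partition just constructed, not for an arbitrary one.
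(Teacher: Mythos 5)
The paper does not prove this theorem at all: it is imported verbatim from Peterson's work (Ref.~\cite{Pet03}) as a known characterization, so there is no in-paper argument to compare against. Your proposal supplies a genuine proof, and it is correct; moreover it is built from exactly the ingredients the paper already has on the table --- Krausz's theorem, the diamond-free $\Leftrightarrow$ ``each edge in exactly one maximal clique and maximal cliques pairwise intersect in at most one vertex'' proposition, and the observation (used in Class~2a/2b) that a Krausz partition is $2$-colorable iff the root graph is bipartite --- so it fits the paper's framework naturally. The necessity direction silently relies on the standard classification of cliques in a line graph (every clique of $L(G)$ of size $\ge 3$ is either a star clique $S_v$ or comes from a triangle of $G$; cliques of size $\ge 4$ are always stars); you use this both for the diamond argument (``$e_a,e_b,e_c$ concur at a vertex'') and for the claim that the maximal cliques of $L(G)$ are exactly the $S_v$ with $\deg_G v\ge 2$, so it deserves an explicit statement, but it is correct and elementary. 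The sufficiency direction is the more valuable half and your bookkeeping there is right: diamond-freeness gives that each edge lies in a unique maximal clique and that maximal cliques meet in at most one vertex, triangle-freeness of $K(\Gamma)$ gives that no vertex lies in three maximal cliques, and padding with singletons yields a legitimate Krausz partition whose root graph inherits a proper $2$-coloring because singletons meet only one other member. The degenerate cases you flag ($K_1$, $K_2$, $K_3=L(K_{1,3})$, isolated edges of $G$) are indeed where the ``maximal cliques are exactly the big stars'' claim needs care, and handling components separately is the right move. In short: correct, self-contained, and a reasonable proof to substitute for the paper's citation.
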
 
Ref.~\cite{Pet03} also has the following result:
\begin{proposition}
A graph is diamond-free if and only if any two maximal cliques intersect
in at most one vertex, which holds if and only if each edge lies in exactly one
maximal clique.
\end{proposition}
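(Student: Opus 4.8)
The plan is to split the double equivalence into two parts and prove each by elementary clique manipulations. First I would dispose of the equivalence between ``any two maximal cliques intersect in at most one vertex'' and ``each edge lies in exactly one maximal clique.'' The key preliminary remark is that in a finite graph every edge, being a clique on two vertices, is contained in \emph{at least} one maximal clique; hence ``exactly one'' is the same as ``at most one.'' Granting that, the two conditions are contrapositives of one another: if an edge $uv$ lies in two distinct maximal cliques $C_1,C_2$, then $\{u,v\}\subseteq C_1\cap C_2$, so those maximal cliques meet in at least two vertices; conversely, if distinct maximal cliques $C_1,C_2$ share two vertices $u\ne v$, then $uv$ is an edge contained in both. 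So this half needs essentially no work.

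The substance is the characterization of diamond-freeness (a diamond being $K_4$ minus one edge). For the direction ``diamond-free $\Rightarrow$ maximal cliques meet in at most one vertex'' I would argue by contraposition: suppose $C_1\ne C_2$ are maximal cliques with distinct $u,v\in C_1\cap C_2$. Two distinct maximal cliques cannot contain one another, so choose $a\in C_1\setminus C_2$ and $b\in C_2\setminus C_1$; then $u\sim v$, $a\sim u$, $a\sim v$, $b\sim u$, $b\sim v$ all hold. If some such pair has $a\not\sim b$, the four vertices $a,b,u,v$ induce a diamond and we are done. Otherwise $a\sim b$ for \emph{every} $a\in C_1\setminus C_2$ and the fixed $b$; since $b$ also sees all of $C_1\cap C_2$ (it lies in $C_2$), $b$ is adjacent to all of $C_1$, so $C_1\cup\{b\}$ is a strictly larger clique, contradicting maximality of $C_1$. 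Hence a non-adjacent cross pair must exist and yields the induced diamond.

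For the converse, ``diamond present $\Rightarrow$ two maximal cliques meet in $\ge 2$ vertices,'' I would take an induced diamond on $\{a,b,c,d\}$ whose only non-edge is $ab$, and extend the triangles $\{a,c,d\}$ and $\{b,c,d\}$ to maximal cliques $C_1\supseteq\{a,c,d\}$ and $C_2\supseteq\{b,c,d\}$. Since $a\not\sim b$ we get $b\notin C_1$ and $a\notin C_2$, hence $C_1\ne C_2$, while $\{c,d\}\subseteq C_1\cap C_2$. Assembling the pieces: $\Gamma$ is diamond-free $\iff$ no two maximal cliques share two or more vertices $\iff$ any two maximal cliques intersect in at most one vertex, which by the first paragraph is equivalent to each edge lying in a unique maximal clique.

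The only step requiring a little care — the mild obstacle — is the maximality argument: one must verify that in a diamond-free graph \emph{every} vertex of $C_2$ lying outside $C_1$ is adjacent to \emph{every} vertex of $C_1$, so that $C_1$ genuinely admits an extension; this uses both that such a vertex already sees all of $C_1\cap C_2$ and that non-adjacency to any vertex of $C_1\setminus C_2$ would have produced a diamond. Everything else is routine bookkeeping about containment and maximality of cliques.
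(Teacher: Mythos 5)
Your proof is correct, but note that the paper itself offers no proof of this proposition: it is quoted verbatim as a known result from Ref.~\cite{Pet03}, so there is no ``paper's approach'' to compare against. Your argument is a sound, self-contained replacement. The two easy equivalences are handled properly (every edge extends to at least one maximal clique, so ``exactly one'' and ``at most one'' coincide; and two distinct maximal cliques share $\ge 2$ vertices precisely when some edge lies in both). The substantive direction is also right: given distinct maximal cliques $C_1,C_2$ with $|C_1\cap C_2|\ge 2$, neither can contain the other, so the cross sets $C_1\setminus C_2$ and $C_2\setminus C_1$ are nonempty, and for a fixed $b\in C_2\setminus C_1$ either some $a\in C_1\setminus C_2$ is non-adjacent to $b$ (whence $\{a,b,u,v\}$ induces $K_4$ minus the edge $ab$, a diamond) or $b$ is adjacent to all of $C_1$, contradicting maximality of $C_1$. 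The converse via extending the two triangles of an induced diamond to distinct maximal cliques sharing the common edge is standard and correct. The one point worth tightening in the write-up is the phrasing of your closing remark: the claim ``every vertex of $C_2$ outside $C_1$ is adjacent to every vertex of $C_1$'' is not a fact about diamond-free graphs in general but the contradiction branch of your case analysis under the hypothesis $|C_1\cap C_2|\ge 2$; as stated in your second paragraph the logic is fine, so this is only a matter of exposition.
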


%Corollary~7.1.4 of Ref.~\cite{BLS99}, which states that $\Gamma$ is a line graph of a bipartite graph if and only if $\Gamma$ contains no claw, \textit{diamond} or {odd hole} as an induced subgraph.

Graph $\Gamma$ in Class~2b cannot have a \textit{diamond} as an induced subgraph and two maximal cliques of $\Gamma$ do not share a common edge. This means that any \textit{minimum clique cover} of $\Gamma$ is also a \textit{minimum clique partition}. All graphs in this class are 2-tessellable because $K(\Gamma)$ is bipartite and the 2-colorable Krausz partition induces the blue and red tessellations. A SQW that use the tessellations induced by the Krausz partition can be cast into Szegedy's model, as will be shown in Sec.~\ref{sec4}.

To define a SQW on $\Gamma$ that is not included in Szegedy's model, we have to choose tessellations blue and red with an edge in the tessellation intersection. For example, the tessellations of the graph in Fig.~\ref{fig:linegraphofbipart}(a) have two edges in the tessellation intersection while the tessellations of the same graph in Fig.~\ref{fig:linegraphofbipart}(b) have no edges in the tessellation intersection. The tessellation union of Fig.~\ref{fig:linegraphofbipart}(b) is a 2-colorable Krausz partition confirming that this graph is in Class~2b.

\begin{figure}[ht!] 
\centering
\includegraphics[scale=0.45]{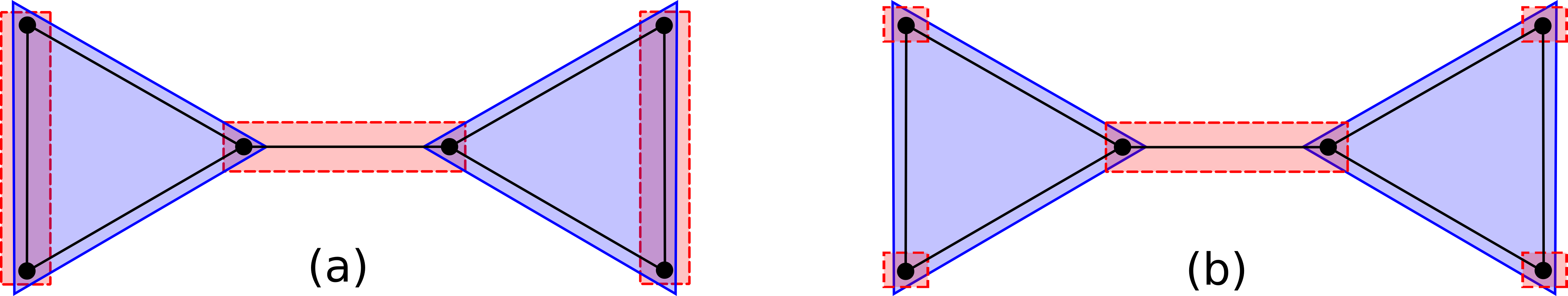}
\caption{A graph (a barbell) with two different tessellation patterns. A SQW with tessellation~(a) does not belong to Szegedy's model while a SQW with tessellation~(b) always belongs to Szegedy's model.}
\label{fig:linegraphofbipart}
\end{figure}

If we remove the elements of the Krausz partition that are not maximal cliques, we have a 2-colorable \textit{minimum clique partition}. We must use this partition to define the blue and red polygons of a 2-tessellable SWQ because each maximal clique must be inside a polygon (Lemma~\ref{lemma1}). After this procedure, if there are vertices that do not belong to two polygons, then we have to define new polygons until both tessellations are complete (all vertices must be in exactly two polygons). If there are two or more vertices in the same blue polygon of the 2-colorable \textit{minimum clique partition} that do not belong to red polygons, we can put those vertices in a single red polygon to have one or more edges in the tessellation intersection. Such SQWs cannot be cast into Szegedy's model. If there is at most one vertex in each polygon of the 2-colorable \textit{minimum clique partition} that does not belong to a second polygon (with a different color), then the SQW in this case can be cast into Szegedy's model, as will be shown in Sec.~\ref{sec4}.

\subsection*{Class~2b$'$\,\, Line graphs of bipartite graphs with perfect matching}

Let us define subclass Class~2b$'$ (we will show that Class~2b$'$~$\subset$ Class~2b) of graphs $\Gamma$ that obey the following three conditions: 
\begin{itemize}
\item[(1)]~$\Gamma$ has a \textit{perfect matching} $M$. \label{3conditions}
\item[(2)]~The endpoints of an edge in $M$ induces a \textit{maximal clique} of size two. 
\item[(3)]~If we delete the edges of the perfect matching (leaving the endpoints), we obtain a union of \textit{disconnected} maximal cliques.
\end{itemize} 
\noindent
The third condition is equivalent to state that the complement of the perfect matching $M$ in $\Gamma$ is a union of disconnected maximal cliques, some (or all) of them can have one vertex.

Let us show that a graph $\Gamma$ in Class~2b$'$ is the (or isomorphic to the) line graph of a bipartite graph. Conditions~(1) to~(3) guarantee that $\Gamma$ has a 2-colorable Krausz partition. The edges of the perfect matching are the elements of the Krausz partition with the red color and the maximal cliques in the complement of the perfect matching in $\Gamma$ are the elements with the blue color. Each vertex is in exactly two elements of the partition (elements with different colors) establishing a well-defined Krausz partition. This Krausz partition is 2-colorable, then $\Gamma$ is the line graph of a bipartite graph. This proves that Class~2b$'$~$\subset$ Class~2b.

If we add an extra condition, we can prove a stronger result: A graph $\Gamma$ that obeys conditions~(1) to~(3) and has no vertex with \textit{degree} 1 is the line graph of its clique graph, that is, $\Gamma=L(K(\Gamma))$. Besides, $K(\Gamma)$ is bipartite. Let us show that $\Gamma$ is the line graph of $K(\Gamma)$. This follows from the fact that the vertices of $K(\Gamma)$ correspond to all elements of the Krausz partition of $\Gamma$ because there is no element with one vertex. Each element of the Krausz partition has at least two vertices. Besides, each element is a maximal clique. By the definition of clique graph, two vertices of $K(\Gamma)$ are adjacent if and only if the intersection of the corresponding elements of  the Krausz partition is nonempty. This shows that $K(\Gamma)$ is the root graph of $\Gamma$.

%Let us show that $K(\Gamma)$ is bipartite. We can split the vertices of $K(\Gamma)$ in two set $S_1$ and $S_2$ defined in the following way: The vertices in $S_1$ correspond to the edges in the perfect matching. Each edge in the perfect matching is a maximal clique of size two and produces a vertex in $K(\Gamma)$. Two vertices in $S_1$ are nonadjacent by the definition of perfect matching, that is, two elements of the perfect matching are at least one edge apart and this edge belongs to a maximal clique that produces a new intermediate vertex in $K(\Gamma)$. The vertices in $S_2$ correspond to the maximal cliques of at least size 2 in the complement of the perfect matching in $\Gamma$ because there are no isolated vertices. Those maximal cliques (size 2 or more) correspond to new vertices in $K(\Gamma)$. Two vertices in $S_2$ are nonadjacent because there is an edge of the perfect matching between the maximal cliques that correspond to those vertices in $K(\Gamma)$. Therefore, $K(\Gamma)$ is bipartite. 

\section{Staggered QWs that are in Szegedy's model}\label{sec4}

Graphs in Class~2b have 2-colorable Krausz partitions. If we use the Krausz partition to define the tessellations (there is exactly one vertex in the polygon intersections), then the SQWs are included in the extended Szegedy QW model (see Appendix~\ref{appendixB} for formal definition). The following proposition generalizes this statement. 
\begin{proposition}\label{mainproposition}
A $2$-tessellable SQW with no edge in the intersection of the tessellations can be cast into the extended Szegedy QW model.
\end{proposition}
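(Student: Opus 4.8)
\medskip
\noindent The plan is to manufacture, straight from the two tessellations, a bipartite graph $G$ on which the given SQW becomes an (extended) Szegedy walk. Write $\alpha_0,\dots,\alpha_{m-1}$ and $\beta_0,\dots,\beta_{n-1}$ for the polygons of the two tessellations, and for a vertex $v$ of $\Gamma$ let $\alpha_{i(v)}$ and $\beta_{j(v)}$ denote the unique polygons of the respective tessellations containing $v$ (unique because a tessellation is a partition of $V(\Gamma)$). Define $G$ with parts $X=\{x_0,\dots,x_{m-1}\}$ and $Y=\{y_0,\dots,y_{n-1}\}$, one vertex per polygon, and join $x_i$ to $y_j$ exactly when $\alpha_i\cap\beta_j\neq\emptyset$.

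The hypothesis enters through the equivalent statement $|\alpha_i\cap\beta_j|\le 1$ for all $i,j$: two distinct vertices $v,w$ of $\alpha_i\cap\beta_j$ would, $\alpha_i$ being a clique, be joined by an edge lying in both tessellations, which is excluded. Hence (i)~$G$ is simple, and (ii)~the map $v\mapsto\{x_{i(v)},y_{j(v)}\}$, which lands in $E(G)$ because $v\in\alpha_{i(v)}\cap\beta_{j(v)}$, is injective; it is also onto, since an edge $x_iy_j$ of $G$ is witnessed by some vertex of $\alpha_i\cap\beta_j$. Moreover, as $v$ runs over a fixed $\alpha_i$ the indices $j(v)$ are pairwise distinct (same argument), so $\deg_G(x_i)=|\alpha_i|\ge 1$, and symmetrically on $Y$; thus $G$ has no isolated vertices. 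We therefore get a bijection $V(\Gamma)\leftrightarrow E(G)$, and since $vw\in E(\Gamma)$ iff $v$ and $w$ share an $\alpha$-polygon or a $\beta$-polygon (the tessellations cover all edges and each polygon is a clique), this bijection is an isomorphism $\Gamma\cong L(G)$; in particular $\Gamma$ lies in Class~2b.

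I would then transport the walk. Identify $\mathcal H_\Gamma=\mathrm{span}\{\ket{v}:v\in V(\Gamma)\}$ with the edge subspace $\mathcal E=\mathrm{span}\{\ket{x_i}\otimes\ket{y_j}:x_iy_j\in E(G)\}$ of $\mathbb C^{X}\otimes\mathbb C^{Y}$ via the isometry $\ket{v}\mapsto\ket{x_{i(v)}}\otimes\ket{y_{j(v)}}$ (an isometry because distinct vertices go to distinct product basis kets). The decisive point is that each inducing vector becomes a Szegedy state: every $v\in\alpha_i$ has $i(v)=i$, so $\ket{\alpha_i}\mapsto\ket{x_i}\otimes\ket{\tilde\alpha_i}$ with $\ket{\tilde\alpha_i}=\sum_{v\in\alpha_i}a_{i,v}\ket{y_{j(v)}}$, and $\ket{\tilde\alpha_i}$ is a unit vector because the kets $\ket{y_{j(v)}}$ ($v\in\alpha_i$) are orthonormal and $\sum_{v\in\alpha_i}|a_{i,v}|^2=1$; symmetrically $\ket{\beta_j}\mapsto\ket{\tilde\beta_j}\otimes\ket{y_j}$. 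Substituting these into Eqs.~(\ref{U_0})--(\ref{beta_k}), $U_0$ and $U_1$ become the reflections $2\sum_i\bigl(\ket{x_i}\otimes\ket{\tilde\alpha_i}\bigr)\bigl(\bra{x_i}\otimes\bra{\tilde\alpha_i}\bigr)-I$ and $2\sum_j\bigl(\ket{\tilde\beta_j}\otimes\ket{y_j}\bigr)\bigl(\bra{\tilde\beta_j}\otimes\bra{y_j}\bigr)-I$ on $\mathcal E$ --- exactly the two reflections of the extended Szegedy walk on $G$, with transition weights $|a_{i,v}|^2$ from $x_i$ toward $y_{j(v)}$ and $|b_{j,v}|^2$ from $y_j$ toward $x_{i(v)}$ (the uniform-superposition choice recovering the symmetric random walk on $G$). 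Hence $U=U_1U_0$ is unitarily equivalent to the Szegedy walk operator on $G$.

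The step I expect to cost the most care is reconciling the construction with the precise definition of the extended Szegedy model in Appendix~\ref{appendixB}: one must verify that the bipartite graph $G$ just built, the restriction of the two reflections to the edge subspace $\mathcal E$ (hence the placement of the $-I$), general complex amplitudes $a_{i,v}$ and $b_{j,v}$, and degenerate polygons --- singletons, i.e.\ degree-one vertices of $G$, for which $\ket{\tilde\alpha_i}$ collapses to a single basis ket --- all fall within its scope. The combinatorial core --- turning ``no edge in the tessellation intersection'' into $|\alpha_i\cap\beta_j|\le 1$ and extracting from it both the bijection $V(\Gamma)\leftrightarrow E(G)$ and the factorizations of $\ket{\alpha_i}$ and $\ket{\beta_j}$ --- is then routine.
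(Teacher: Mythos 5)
Your proposal is correct, and its computational heart --- the bijection $V(\Gamma)\leftrightarrow E(G)$, the isometry $\ket{v}\mapsto\ket{x_{i(v)}}\otimes\ket{y_{j(v)}}$, and the factorization of each polygon vector into a Szegedy state --- coincides with the paper's map $T$ of Eq.~(\ref{T}) and Eqs.~(\ref{pphi_k})--(\ref{psi_k}). Where you genuinely diverge is in how you establish the combinatorial foundation. The paper first routes through the classification of Sec.~\ref{sec3}: it invokes Lemma~\ref{lemma1} and the Class~1/2a/2b analysis to conclude that $\Gamma'$ is the line graph of a bipartite graph, reads the two parts $A,B$ off the resulting $2$-colorable Krausz partition, and then verifies the Szegedy axioms explicitly (stochasticity of $P,Q$, the condition $P'=Q'^{\mathrm T}$ against the biadjacency matrix~(\ref{biadmatrix}), and the block structure of $R_0$, $R_1$, $W$ exhibiting the idle subspace). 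You instead extract everything from the single observation that ``no edge in the intersection'' is equivalent to $|\alpha_i\cap\beta_j|\le 1$, and build the bipartite root graph $G$ directly from the polygons; the isomorphism $\Gamma\cong L(G)$ then falls out rather than being quoted. This makes your argument self-contained --- it needs neither Beineke's nor Krausz's theorem nor Lemma~\ref{lemma1} --- at the cost of leaving two bookkeeping items only sketched: the explicit check that the reflections act as $-I$ on the complement of the edge subspace $\mathcal E$ (so that $W$ is $U$ padded with an identity block, as in the paper's final displays), and the verification that the nonzero patterns of the two weight matrices are transposes of each other as Definition~\ref{def:SzegedyQW} requires; both are immediate from your construction of $G$, but the paper spells them out and you should too.
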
 
\begin{proof} 
Suppose that the 2-tessellable SQW is defined on a graph $\Gamma'$ with $N$ vertices. The labels of the vertex set of $\Gamma'$ run from 0 to $N-1$ and the basis of the Hilbert space associated with $\Gamma'$ is $\big\{\ket{0},...,\ket{N-1}\big\}$.

We have shown in Sec.~\ref{sec3} that $\Gamma'$ belongs neither to Class~1 nor Class~2a because any 2-tessellable SQW on graphs in those classes has an edge in the tessellation intersection. Then, $\Gamma'$ must be in Class~2b, that is,  $\Gamma'$ is the line graph of a bipartite graph $\Gamma$.

%%%Suppose that $\Gamma'$ is the \textit{complete graph}. There are only two possible tessellations: (1)~One blue polygon that contains all vertices and $N$ red polygons with one vertex each. (2)~The other way around. A SQW with tessellation~(1) is equivalent to an extended Szegedy walk on the \textit{complete bipartite graph} $K_{1,N}$. A SQW with tessellation~(2) is equivalent to an extended Szegedy walk on the complete bipartite graph $K_{N,1}$.
%%%Suppose that $\Gamma'$ is not the complete graph.

Since $\Gamma'$ is 2-tessellable, each maximal clique is inside a blue or red polygon (Lemma~\ref{lemma1}). Adjacent maximal cliques have different color. Since there is no edge in the tessellation intersection (there is exactly one vertex in the polygon intersections), the tessellation union is a 2-colorable Krausz partition. The root graph $\Gamma$ is bipartite with edges connecting set $A$ (associated with blue polygons) and set $B$ (associated with red polygons). No two vertices in $A$ are adjacent and the same for $B$. Let $m,n$ be the number of polygons in the blue and red tessellations, respectively. We label the vertices in $A$ by $\tilde\alpha_k$, $0\le k<m$, and the vertices in $B$ by $\tilde\beta_k$, $0\le k<n$ using the one-to-one mapping between the elements of the Krausz partition and the vertices of $\Gamma$. Each vertex $k$ of $\Gamma'$ belongs to the intersection of two polygons
%%%\footnote{We use the notation $\ket{\tilde\alpha_i}$ and $\ket{\beta_j}$ for polygons $\alpha_i$ and $\beta_j$ of the blue and red tessellations to distinguish the labels used in the root graph.} 
$\alpha_i$ and $\beta_j$ for some $0\le i<m$ and $0\le j<n$. Then, there is an one-to-one mapping between the vertex set of $\Gamma'$ and the edge set of $\Gamma$ given by
\begin{equation}\label{bijection}
{k}\leftrightarrow(\tilde\alpha_i,\tilde\beta_j)
\end{equation}
where $k$ is a vertex of $\Gamma'$ and $\tilde\alpha_i,\tilde\beta_j$ are the endpoints of the edge of $\Gamma$ that corresponds to $k$. Label $k$ runs from 0 to $N-1$ and labels $i,j$ run over the edge set $E(\Gamma)$, whose cardinality is equal to $N$. Labels $i,j$ run in the whole range $0\le i<m$ and $0\le j<n$ if and only if $\Gamma$ is the \textit{complete bipartite graph}.

Let $\ket{\tilde\alpha_i}\otimes\ket{\tilde\beta_j}\equiv\ket{\tilde\alpha_i,\tilde\beta_j}$, for $0\le i<m$ and $0\le j<n$, be the computational basis of ${\cal H}^m\otimes{\cal H}^n$ using weird labels, which help to remember the correspondence between the vertices $\tilde\alpha_i$ and $\tilde\beta_j$ of $\Gamma$ and polygons $\alpha_i$ and $\beta_j$ of $\Gamma'$.

Using bijection~(\ref{bijection}), define the linear transformation $T:{\cal H}^N\rightarrow{\cal H}^m\otimes{\cal H}^n$ by 
\begin{equation}\label{T}
T\ket{k}=\ket{\tilde\alpha_i}\otimes\ket{\tilde\beta_j}.
\end{equation}
Using $T$, define the unitary operators $R_0$ and $R_1$ in ${\cal H}^m\otimes{\cal H}^n$ by
\begin{eqnarray}
% \nonumber to remove numbering (before each equation)
  R_0 &=& 2\sum_{k=0}^{m-1} \ket{\phi_{k}}\bra{\phi_{k}} - I, \label{R_0}\\
  R_1 &=& 2\sum_{k=0}^{n-1} \ket{\psi_{k}}\bra{\psi_{k}} - I, \label{R_1}
\end{eqnarray}
where 
\begin{eqnarray}
% \nonumber to remove numbering (before each equation)
  \ket{\phi_{k}} &=& T\ket{\alpha_k} \,=\, \sum_{k'\in\alpha_k} a_{k\,k'} \,T\ket{k'}, \label{pphi_k} \\
  \ket{\psi_{k}} &=& T\ket{\beta_k} \,=\, \sum_{k'\in\beta_k} b_{k\,k'} \,T\ket{k'}, \label{ppsi_k}
\end{eqnarray}
$a_{kk'}$ and $b_{kk}$ are the entries of $\ket{\alpha_k}$ and $\ket{\beta_k}$ given by Eqs.~(\ref{alpha_k}) and~(\ref{beta_k}). Now, let us show that $W=R_1 R_0$ is the evolution operator of a well-defined extended Szegedy QW on $\Gamma$ equivalent to the SQW on $\Gamma'$. The Hilbert space associated with $\Gamma$ is ${\cal H}^m\otimes{\cal H}^n$.

Expressing $(\ref{pphi_k})$ and $(\ref{ppsi_k})$ in the computational basis of ${\cal H}^m\otimes{\cal H}^n$, we obtain
\begin{eqnarray}
% \nonumber to remove numbering (before each equation)
  \ket{\phi_{k}} &=& \,\,\sum_{\mathclap{\substack{j\\ \textrm{ such that}\\(\tilde\alpha_k,\tilde\beta_j)\in E(\Gamma)}}} \,\,\,\, a_{k;(k,j)} \,\ket{\tilde\alpha_k,\tilde\beta_j}, \label{phi_k} \\
  \ket{\psi_{k}} &=&  \,\,\sum_{\mathclap{\substack{i\\ \textrm{ such that}\\(\tilde\alpha_i,\tilde\beta_k)\in E(\Gamma)}}} \,\,\,\, {b}_{k;(i,k)} \,\ket{\tilde\alpha_i,\tilde\beta_k}, \label{psi_k}
\end{eqnarray}
where index $k$ of $\ket{\phi_{k}}$ runs from 0 to $m-1$ and index $k$ of $\ket{\psi_{k}}$ runs from 0 to $n-1$. The notation $a_{k;(i,j)}$ means that $a_{k;(i,j)}=a_{k;k'}$ for the value of $k'$ such that $T\ket{k'}= \ket{\tilde\alpha_i,\tilde\beta_j}$. Since $k'$ in Eq.~(\ref{pphi_k}) is in polygon $\alpha_k$, $T\ket{k'}=\ket{\tilde\alpha_k,\tilde\beta_j}$ for some $0\le j<n$. Since $k'$ in Eq.~(\ref{ppsi_k}) is in polygon $\beta_k$, $T\ket{k'}=\ket{\tilde\alpha_i,\tilde\beta_k}$ for some $0\le i<m$. The same notation applies to $b_{k;(i,j)}$.

Using $\ket{\phi_k}$ and $\ket{\psi_k}$, mapping (\ref{bijection}), and the notation of Appendix~\ref{appendixB}, we define matrices $P$ and $Q$ whose dimensions are $m\times n$ and $n\times m$ and whose entries are $p_{kj}=\left|a_{k;(k,j)}\right|^2$ and $q_{ki}=\left|b_{k;(i,k)}\right|^2$, respectively. $P$ and $Q$ are right-stochastic matrices. In fact, $\sum_{j=0}^{n-1} p_{kj}=1$, for $0\le k<m$ and $\sum_{i=0}^{m-1} q_{ki}=1$, for $0\le k<n$ because $\ket{\alpha_k}$ and $\ket{\beta_k}$ have unit $l_2$-norm.  Let $P',Q'$ be the matrices obtained from $P,Q$ by replacing the nonzero entries with 1. We have to show that $P'=Q'^{\textrm{T}}$ (see Definition \ref{def:SzegedyQW} and matrix~(\ref{biadmatrix})), or equivalently $p'_{kj}=1$ $\Leftrightarrow$ $q'_{jk}=1$. Suppose that $p'_{kj}=1$. Then $a_{k;(k,j)}\neq 0$, where $a_{k;(k,j)}$ is the coefficient of some $\ket{k'}$ in $\ket{\alpha_k}$. Vertex $k'$ in $\Gamma'$ is in the intersection of polygons $\alpha_k$ and $\beta_j$. The coefficient of $\ket{k'}$ in $\beta_j$ also must be nonzero because $k'\in\beta_j$. Then $b_{j;(k,j)}\neq 0$ and $q'_{jk}=1$. The same argument works the other way around and if $q'_{jk}=1$ then $p'_{kj}=1$.

Now let us display the connection between the evolution operator $U$ of the staggered model given by Eq.~(\ref{U}) and $W=R_1R_0$ of Szegedy's model. $U$ and $W$ are not exactly equal in general because $W$ may have an idle subspace. Suppose that the first vectors of the computational basis of ${\cal H}^m\otimes{\cal H}^n$ are the vectors $\ket{\tilde\alpha_i,\tilde\beta_j}$ in the same order of the elements of the computational basis of ${\cal H}^N$ after using bijection~(\ref{bijection}). The order of the remaining $(m\cdot n-N)$ vectors of the computational basis of ${\cal H}^m\otimes{\cal H}^n$ does not matter.

If $\ket{\tilde\alpha_i,\tilde\beta_j}$ and $\ket{\tilde\alpha_{i'},\tilde\beta_{j'}}$ are vectors in the computational basis of ${\cal H}^m\otimes{\cal H}^n$ that correspond to $\ket{k}$ and $\ket{k'}$ in ${\cal H}^N$, then using Eqs.~(\ref{T}), (\ref{R_0}), and~(\ref{phi_k}) we obtain
\begin{equation}
\bra{\tilde\alpha_i,\tilde\beta_j}R_0\ket{\tilde\alpha_{i'},\tilde\beta_{j'}}=\bra{k}U_0\ket{k'}.
\end{equation}
This shows that the submatrix of $R_0$ obtained by selecting the first $N$ lines and columns  of $R_0$ is equal to $U_0$. If $\ket{a,b}$ is in the computational basis of ${\cal H}^m\otimes{\cal H}^n$ and $(a,b)\not\in E(\Gamma)$ then $\bra{a,b}R_0\ket{a,b}=-1$ because $\braket{a,b}{\phi_k}=0$, $\forall k$. The remaining entries of $R_0$ are zero, as we can systematically check. Summing up, we have shown that
\begin{equation}
R_0 \,=\, \left[ \begin{array}{cccc}
U_0 & 0 & \cdots & 0 \\
0 & -1 & \cdots & 0 \\
\vdots & \vdots  & \ddots & \vdots  \\
0 & 0 &\cdots & -1 \end{array} \right].
\end{equation}

It is straightforward to show that the same property holds when we compare $R_1$ and $U_1$, that is, the submatrix of $R_1$ obtained by selecting the first $N$ lines and columns  of $R_1$ is equal to $U_1$, the remaining diagonal entries of $R_1$ are $(-1)$, and the remaining nondiagonal entries are zero. Summing up, we have shown that
\begin{equation}
W \,=\, \left[ \begin{array}{cccc}
U & 0 & \cdots & 0 \\
0 & 1 & \cdots & 0 \\
\vdots & \vdots  & \ddots & \vdots  \\
0 & 0 &\cdots & 1 \end{array} \right],
\end{equation} 
where the number of 1's in the diagonal is $(m\,n-N)$, which is the dimension of the idle space in Szegedy's model. \qed

\end{proof}

The evolution operators of the Szegedy and SQW models are not exactly the same when the bipartite graph is not complete because the dimension $mn$ of the Hilbert space of Szegedy's model is larger than the dimension $N$ of the staggered model. We have shown that $U\ket{\psi}\leftrightarrow W\ket{\psi'}$ if $\ket{\psi}\leftrightarrow \ket{\psi'}$, where $\ket{\psi}\leftrightarrow \ket{\psi'}$ means that $\ket{\psi'}$ is obtained from $\ket{\psi}$ after using the bijection~(\ref{bijection}), or equivalently, $TU\ket{\psi}=WT\ket{\psi}$, $\forall \ket{\psi}\in{\cal H}^N$, where $T$ is given by Eq.~(\ref{T}).

\section{Staggered QWs that are in the coined model}\label{sec5}

The 2-tessellable SQWs on graphs $\Gamma(V,E)$ in Class~2b$'$ are included in the coined model if the tessellation that covers the perfect matching uses vectors in uniform superposition. This result follows from the demonstration of Theorem.~4.2 of Ref.~\cite{Por16}, which states that Szegedy's QWs on bipartite graphs $\Gamma(X,Y,E)$ are equivalent to a flip-flop coined QW on some $|X|$-\textit{multigraph}, if the vertices in $Y$ have degree 2 and the edges incident on the vertices in $Y$ have equal weight.

We briefly review this result using the blue and red tessellations induced by the Krausz partition of $\Gamma(V,E)$. The red tessellation covers the perfect matching. If we choose the vertex labels so that vertices in the blue polygons are consecutive numbers, the unitary operator $U_0$ associated with the blue tessellation will be a block diagonal matrix, each block associated with a blue polygon. $U_0$ is the coin operator of a QW on a new graph $\Gamma'(V',E')$ obtained from $\Gamma(V,E)$ by shrinking the blue polygons into single vertices so that a blue polygon that is a $d$-clique becomes a degree-$d$ vertex in $V'$. The cardinality of $V'$ is the number of blue polygons. The edge set $E'$ corresponds to the \textit{perfect matching} of $\Gamma$. This shrinking process can produce a \textit{multigraph}~\cite{Por16}.

The unitary operator $U_1$ associated with the red tessellation is
\begin{equation}
	U_1\,=\, 2\sum_{(i,j)\in M} \ket{\beta_{ij}}\bra{\beta_{ij}} - I,
\end{equation}
 where
\begin{equation}
	\ket{\beta_{ij}}\,=\, \frac{\ket{i}+\ket{j}}{\sqrt 2}
\end{equation}
and $M$ is the perfect matching. Simplifying $U_1$ we obtain
\begin{equation}\label{simpU_1}
	U_1\,=\, \sum_{(i,j)\in M} \ket{i}\bra{j} + \ket{j}\bra{i},
\end{equation}
which is a flip-flop shift operator on $\Gamma'(V',E')$. $U=U_1 U_0$ is the evolution operator of a well-defined flip-flop coined model on $\Gamma'(V',E')$. 

In the next subsections, we show nontrivial examples that display the connection between the staggered and coined models.

\subsection{Honeycomb lattice}

Consider the SQW on the graph $\Gamma$ of Fig.~\ref{fig:honeycomb1}, which is obtained from the hexagonal lattice (honeycomb) after replacing the vertices by triangles. Graph $\Gamma$ belongs to Class~2b$'$ because it obeys conditions~(1) to~(3). The vertex labels are chosen following a method similar to the one used for two-dimensional lattices. Recall that in the latter case, the vertex label $(x,y)$ means that it is represented by vector $x\vec e_x+y\vec e_y$, where $\vec e_x$ and $\vec e_y$ are the unit canonical vectors along axes $x$ and $y$, respectively. For the honeycomb, the unit canonical vectors must be replaced by the vectors $\vec e_x$ and $\vec e_y$ displayed in Fig.~\ref{fig:honeycomb2}. The position of half nonadjacent vertices are obtained using vectors $x\vec e_x+y\vec e_y$, for $0\le x,y < m$, where $m$ is the even number of hexagons in the $x$- or $y$-directions (we are using the cyclic or torus-like boundary conditions). The other vertices are obtained using vectors  $x\vec e_x+y\vec e_y+\vec a$, where $\vec a=(\vec e_x+\vec e_y)/3$, as shown in Fig.~\ref{fig:honeycomb2}. So, we use labels $(x,y,0)$ for the first set of vertices and $(x,y,1)$ for the second set of vertices of the honeycomb. The labels of graph $\Gamma$ of Fig.~\ref{fig:honeycomb1} requires a fourth index $k$ describing the position of the vertices inside the triangle such as $(x,y,i,k)$, for $0\le i\le 1$ and $0\le k\le 2$.

\begin{figure}[h!] 
\centering
\includegraphics[scale=0.42]{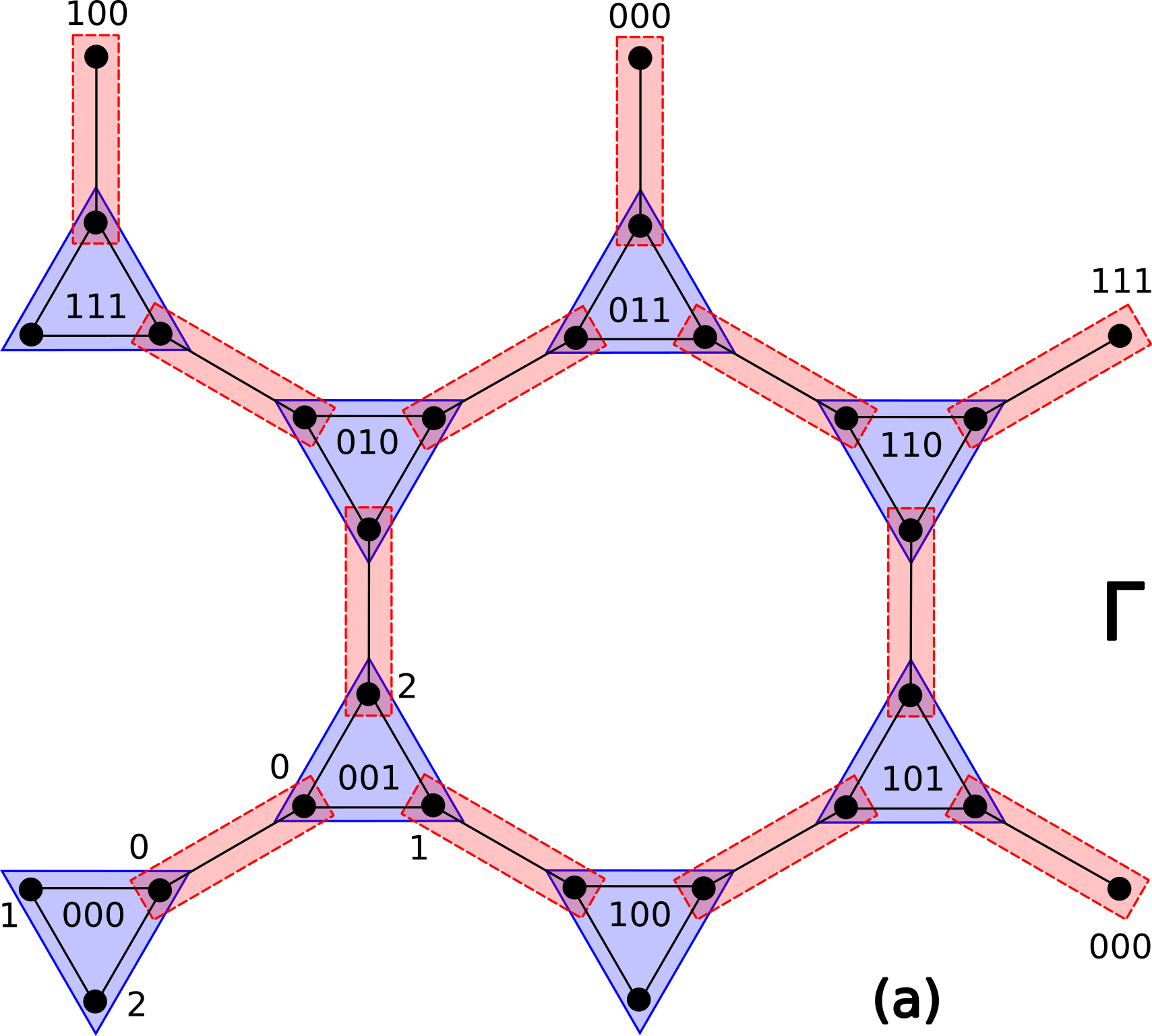}
\caption{Graph $\Gamma$ on which a SQW is defined. Labels $(x,y,i)$ used in Eq.~(\ref{alpha_xyi}) are shown inside the blue polygons. Label $k$, which runs from~0 to~2 used in Eq.~(\ref{beta_xyk}), are shown outside the blue polygons. }
\label{fig:honeycomb1}
\end{figure}

\begin{figure}[h!] 
\centering
\includegraphics[scale=0.38]{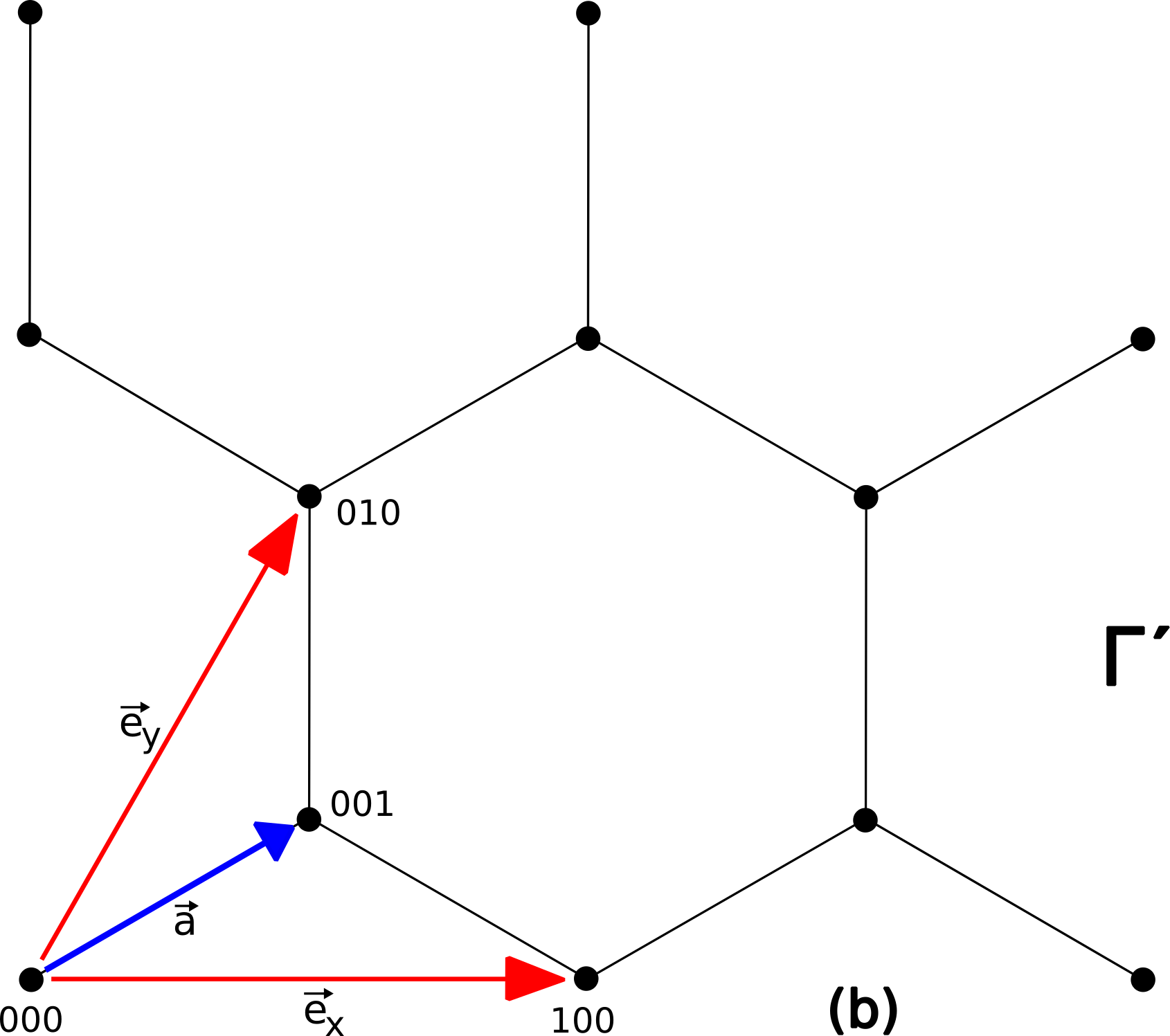}
\caption{Graph $\Gamma'$ is obtained from $\Gamma$ of Fig.~\ref{fig:honeycomb1} by replacing the blue polygons with single vertices. Vectors $\vec e_x$ and $\vec e_y$ are shown in red and the unit vector $\vec a$ in blue. We are taking $m=2$, because there are two hexagons in the $x$-direction and two hexagons in the $y$-direction. The boundary conditions are cyclic following the directions $\vec e_x$ and $\vec e_y$. The coined QW on $\Gamma'$ with the Grover coin is equivalent to the SQW on $\Gamma$ because they have the same evolution operator.}
\label{fig:honeycomb2}
\end{figure}

Vectors $\alpha$ associated with the blue polygons are 
\begin{eqnarray}\label{alpha_xyi}
\ket{\alpha_{x,y}^{(i)}}&=&\frac{1}{\sqrt 3}\left(\ket{x,y,i,0}+\ket{x,y,i,1}+
\right.\nonumber\\ &&\left.
\ket{x,y,i,2}\right),
\end{eqnarray}
and they split into two sets: When $i=0$, the polygon refers to a vertex that in the honeycomb has the form $(x,y,0)$, and  when $i=1$, the polygon refers to a vertex that in the honeycomb has the form $(x,y,1)$. The first set consists of inverted triangles in $\Gamma$ and the latter consists of usual triangles. Vectors $\ket{\alpha_{x,y}^{(i)}}$ need not necessarily be in uniform superposition and can assume different values on different vertices. When they are in uniform superposition, they produce the Grover coin at each vertex.

Vectors $\beta$ associated with the red polygons are
\begin{eqnarray}\label{beta_xyk}
\ket{\beta_{x,y}^{(k)}}&=&\frac{1}{\sqrt 2}\left(\ket{x,y,0,k}+
\right.\nonumber\\&&\left.
\ket{x-\delta_{1k},y-\delta_{2k},1,k}\right),
\end{eqnarray}
for $0\le k\le 2$. They must be in uniform superposition.

The evolution operator is $U=U_1 U_0$, where $U_0$ induces the blue tessellation and is given by
\begin{equation}
	U_0 \,=\, 2\sum_{x,y=0}^{m-1} 
	\left(\ket{\alpha_{xy}^{(0)}}\bra{\alpha_{xy}^{(0)}}+
	\ket{\alpha_{xy}^{(1)}}\bra{\alpha_{xy}^{(1)}}\right) - I,
\end{equation}
and $U_1$ induces the red tessellation and is given by
\begin{eqnarray}
	U_1 &=& 2\sum_{x,y=0}^{m-1} 
	\left(\ket{\beta_{xy}^{(0)}}\bra{\beta_{xy}^{(0)}}+
\ket{\beta_{xy}^{(1)}}\bra{\beta_{xy}^{(1)}}+\right.\nonumber\\
&&\left.\ket{\beta_{xy}^{(2)}}\bra{\beta_{xy}^{(2)}}\right) - I.
\end{eqnarray}
$U_1$ can be further simplified and reduced into the form of Eq.~(\ref{simpU_1}). $U$ is equal to the evolution operator of the coined model analyzed in Ref.~\cite{Abal:2010,LYW15}. In the continuous-time case, the honeycomb was analyzed in Refs.~\cite{FGT14,FGT15} using the evolution operator of the continuous-time QW model~\cite{Farhi:1998}.

There are three regular two-dimensional lattices: squared, triangular, and hexagonal. Only the coined model on the hexagonal lattice corresponds to SQWs on \textit{planar graphs}. Coined models on the squared and triangular lattice correspond to SQW on graphs that have 4-cliques and 6-cliques as \textit{induced graphs}, which are \textit{nonplanar}.

\subsection{Three-state coined QWs}

An interesting question is can we define a SQW on a \textit{directed graph}? The answer seems to be negative because if the walker is on vertex $v_1$ that has an edge pointing to $v_2$ and there is no edge from $v_2$ pointing to $v_1$ then there is a coin value $i$ so that $S\ket{v_1}\ket{i}=\ket{v_2}\ket{j}$ for some $j$ and there is no coin value $i$ so that  $S\ket{v_2}\ket{i}=\ket{v_1}\ket{j}$, where $S$ is the shift operator of a coined model on the directed graph. This means that $S^2\neq I$. Since the staggered model uses only Hermitian and unitary operators to produce the evolution operator, it seems that we cannot define SQWs on directed graphs that would be equivalent to the coined model. There is an exception when the directed edges are loops. In the next example, we show that a coined model on a directed graph is equivalent to a SQW on a graph that is in Class~2b$\setminus$Class~2b$'$.

\begin{figure}[ht!] 
\centering
\includegraphics[scale=0.68]{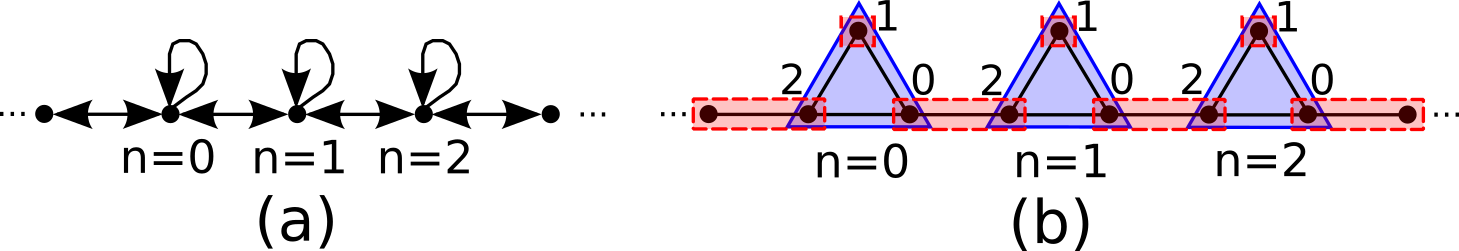}
\caption{Panel~(a) describes a directed graph on which a three-state coined QW is defined. Panel~(b) describes a 2-tessellable graph on which an equivalent SQW is defined.}
\label{fig:3states}
\end{figure}

Consider the flip-flop three-state coined QW defined on the directed graph of Fig.~\ref{fig:3states}(a), which was analyzed in Refs.~\cite{SBJ12,FB14,Mac15}. The evolution operator is $U=S\,(C(\rho)\otimes I)$, where $C(\rho)$ is the coin operator 
\begin{equation}
\label{coin_C}
C(\rho) = \left(
  \begin{array}{ccc}
    -\rho^2 & \rho\sqrt{2-2\rho^2} & 1-\rho^2 \\
    \rho\sqrt{2-2\rho^2} & 2\rho^2-1 & \rho\sqrt{2-2\rho^2} \\
    1-\rho^2 & \rho\sqrt{2-2\rho^2} & -\rho^2 \\
  \end{array}
\right),
\end{equation}
with parameter $\rho\in(0,1)$ and $S$ is the flip-flop shift operator
\begin{eqnarray}
S&=&\sum_{n\in \mathbb{Z}} \left(\ket{n+1,2}\bra{n,0}+\ket{n,1}\bra{n,1}+\right.\nonumber\\
 &&\left.\ket{n-1,0}\bra{n,2}\right).
\end{eqnarray}
The flip-flop shift operator is interesting because it does not use information that is external to the graph such as go to the right or go to the left and can be easily extended to generic graphs. The familiar three-state Grover walk~\cite{IKS05} is recovered taking $\rho=1/\sqrt 3$.

Coin $C(\rho)$ is an orthogonal reflection because it has only one $(+1)$-eigenvector, which is given by\begin{equation}
\ket{\alpha_\rho}\,=\,\sqrt{\frac{1-\rho^2}{2}}\ket{0}+\rho\ket{1}+\sqrt{\frac{1-\rho^2}{2}}\ket{2}.
\end{equation}
Then $C(\rho)=2\ket{\alpha_\rho}\bra{\alpha_\rho}-I$. Define 
\begin{equation}
U_0\,=\,2\sum_{n\in \mathbb{Z}} \ket{n,\alpha_\rho}\bra{n,\alpha_\rho}-I,
\end{equation}
where vectors $\ket{n,\alpha_\rho}$ induce the blue polygons of Fig.~\ref{fig:3states}(b) using labels $(n,i)$, $0\le i\le 2$, for the vertices of the graph. Notice that $U_0=C(\rho)\otimes I$. 

Define $U_1$ by
\begin{equation}
U_1\,=\,2\sum_{n\in \mathbb{Z}} \left(\ket{\beta_n^{(0)}}\bra{\beta_n^{(0)}}+\ket{\beta_n^{(1)}}\bra{\beta_n^{(1)}}\right)-I,
\end{equation} 
using the red tessellation, whose polygons are induced by
\begin{eqnarray}
\ket{\beta_n^{(0)}}&=&\frac{\ket{n,0}+\ket{n+1,2}}{\sqrt 2},\nonumber\\
\ket{\beta_n^{(1)}}&=&\ket{n,1}.
\end{eqnarray}
After simplifying $U_1$, we obtain $U_1=S$. Then, the SQW on the graph of Fig.~\ref{fig:3states}(b) with evolution operator $U=U_1 U_0$ is equivalent to the flip-flop coined QW on the directed graph of Fig.~\ref{fig:3states}(a). The blue tessellation represents the internal coin states and the red tessellation represents the shift operator.

\section{Searching}\label{sec6}

Searching in the staggered model on a graph $\Gamma$ is implemented using partial tessellations. A partial tessellation is a tessellation of an induced subgraph of $\Gamma$ (distinct from $\Gamma$). The vertices in the missing polygons are the marked ones. If the evolution operator is $U=U_1U_0$ and both $U_0$ and $U_1$ are unitary operators associated with partial tessellations, then we demand that the tessellation union covers all vertices. The vertices that belong to only one polygon are the marked ones and the vertices that belong to two polygons are the ordinary ones. If $U_0$ induces a partial tessellation and $U_1$ induces a complete tessellation, the tessellation union always covers all vertices of the graph because the tessellation induced by $U_1$ covers all vertices. A SWQ using at least one partial tessellation is called a \textit{generalized} SQW.

The initial condition is the uniform superposition of all vertices in order to avoid any bias towards the location of the marked vertices, that is
\begin{equation}
	\ket{\psi_0}\,=\,\frac{1}{\sqrt{|V|}}\sum_{v\in V} \ket{v},
\end{equation}
where $V$ is the vertex set of $\Gamma$. The searching algorithm consists of applying $U$ in succession, that is, the final state is $\ket{\psi_t}=U^t\ket{\psi_0}$, where $t$ is the running time.

\begin{figure}[ht!] 
\centering
\includegraphics[scale=0.42]{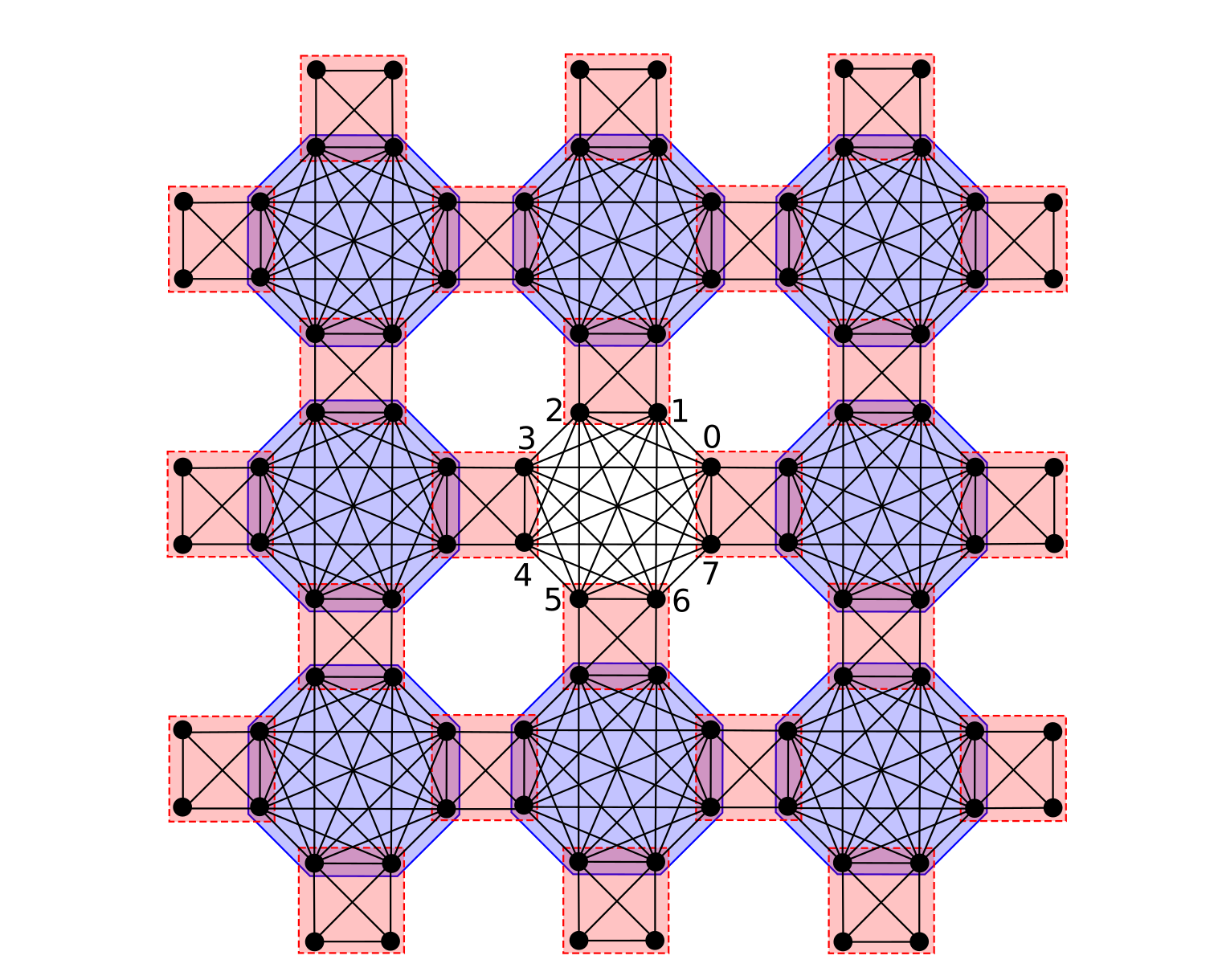}
\caption{Graph in Class~1 with blue 8-clique partial tessellation and red 4-clique complete tessellation. The vertices in the central 8-clique are the marked ones. The 4-cliques at the boundary are identified establishing a torus-like topology.}
\label{fig:graph6}
\end{figure}

We present an example of a generalized SQW on a graph in Class~1 that finds a marked vertex faster than classical random walks. Let us define a generalized SQW on the graph of Fig.~\ref{fig:graph6}. This example is included neither in the coined nor in the Szegedy model because there are edges in the tessellation intersection. This graph is in Class~1 because graph $\Gamma_4$ of Fig.~\ref{fig:app4}, for instance, is an \textit{induced subgraph}. The graph consists of $n^2$ 8-cliques linked by $2n^2$ 4-cliques with a torus-like topology, which is obtained by identifying the external 4-cliques. This graph is 2-tessellable as can be checked in Fig.~\ref{fig:graph6}, which depicts the case $n=3$. The blue polygons cover the 8-cliques and the red polygons cover the 4-cliques. There is a missing blue polygon associated with the central 8-clique characterizing a partial blue tessellation. The marked vertices are the ones in the central 8-clique. All vertices are in the tessellation union. Notice that there are edges that do not belong to the tessellation union. This can happen in the generalized model.

The Hilbert space associated with this graph has dimension $N=8 n^2$. The vectors associated with the blue polygons are
\begin{equation}
\ket{\alpha_{x y}}\,=\, \frac{1}{2\sqrt 2}\sum_{k=0}^7 \ket{x,y,k},
\end{equation}
and the vectors associated with the red polygons are
\begin{eqnarray}
\ket{\beta_{x y}^{(0)}} &=& \frac{1}{2}\,\ket{x,y}\left(\ket{0}+\ket{7}\right)+\nonumber\\
&& \frac{1}{2}\,\ket{x+1,y}\left(\ket{3}+\ket{4}\right),\\
\ket{\beta_{x y}^{(1)}} &=& \frac{1}{2}\,\ket{x,y}\left(\ket{1}+\ket{2}\right)+\nonumber\\
&& \frac{1}{2}\,\ket{x,y+1}\left(\ket{5}+\ket{6}\right),
\end{eqnarray}
for $0\le x,y\le n-1$ and the arithmetic with the labels of $\ket{x,y}$ is performed modulo $n$. The central 8-clique is located at $x=0,y=0$ (with no blue polygon) and the vertices have labels $(x,y,k)$, where $k$ runs from 0 to 7, as shown in Fig.~\ref{fig:graph6}.

The evolution operator is $U=U_1 U_0$, where $U_0$ induces the blue tessellation, given by
\begin{equation}
	U_0 \,=\, 2\sum_{\mathclap{\substack{x,y=0\\
                   (x,y)\neq(0,0)}}}^{n-1} \ket{\alpha_{xy}}\bra{\alpha_{xy}} - I,
\end{equation}
and $U_1$ induces the red tessellation, given by
\begin{equation}
	U_1 \,=\, 2\sum_{x,y=0}^{n-1}\left(\ket{\beta_{xy}^{(0)}}\bra{\beta_{xy}^{(0)}}+
	\ket{\beta_{xy}^{(1)}}\bra{\beta_{xy}^{(1)}}\right) - I.
\end{equation}

\begin{figure}[ht!] 
\centering
\includegraphics[trim=50 490 250 60,clip,scale=0.5]{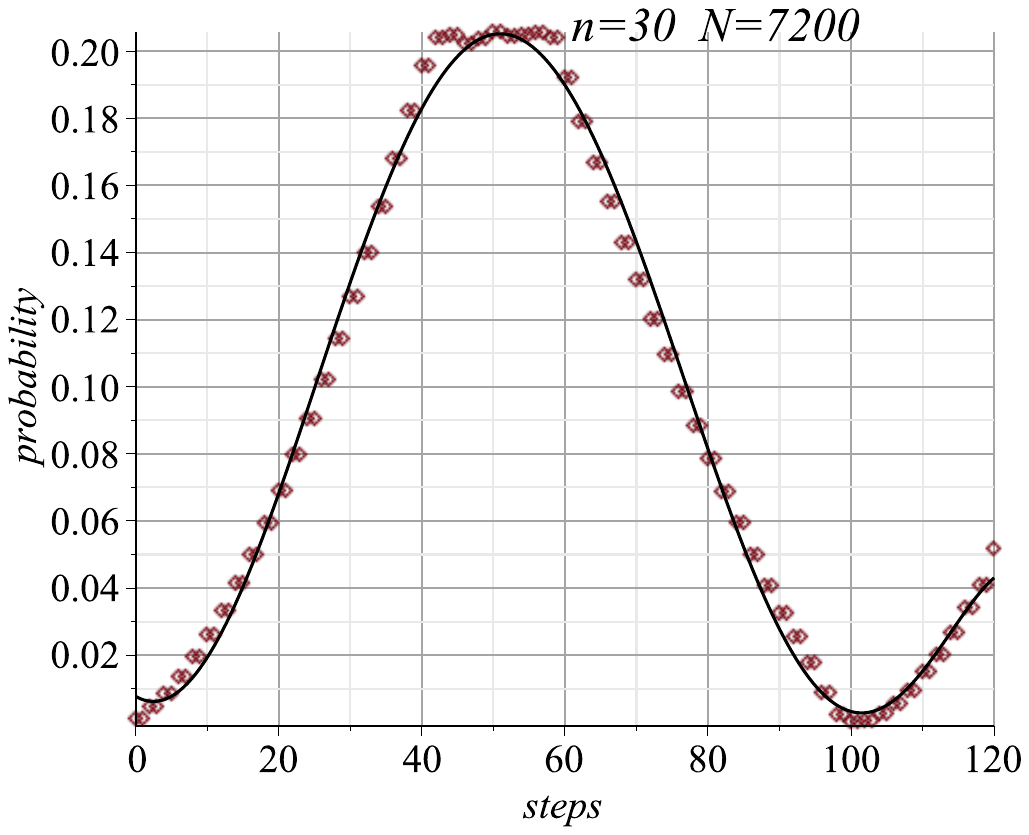}\\
\includegraphics[trim=50 490 300 60,clip,scale=0.59]{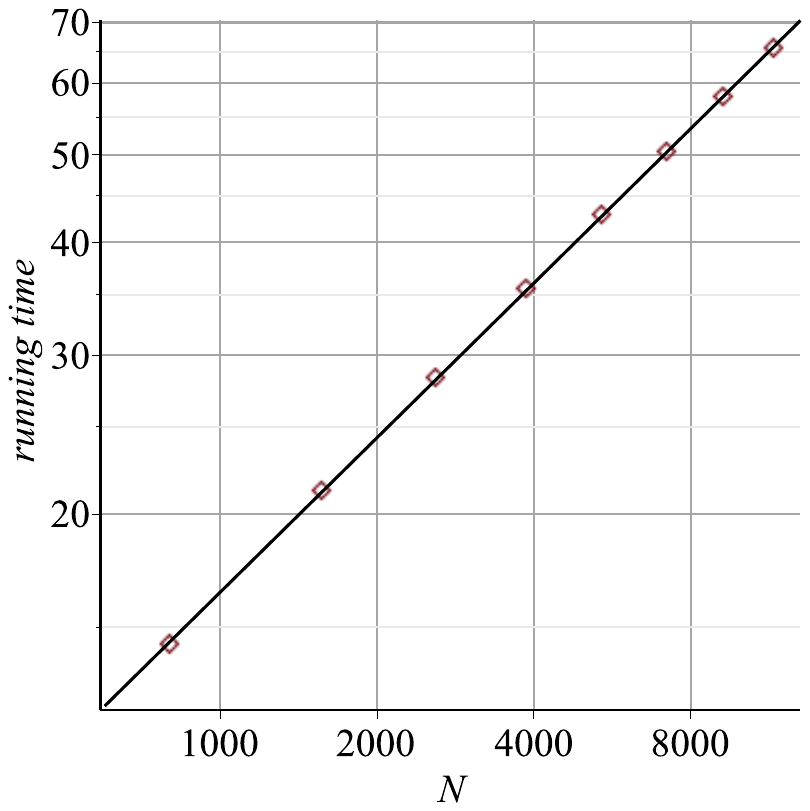}\\
\includegraphics[trim=50 490 300 60,clip,scale=0.6]{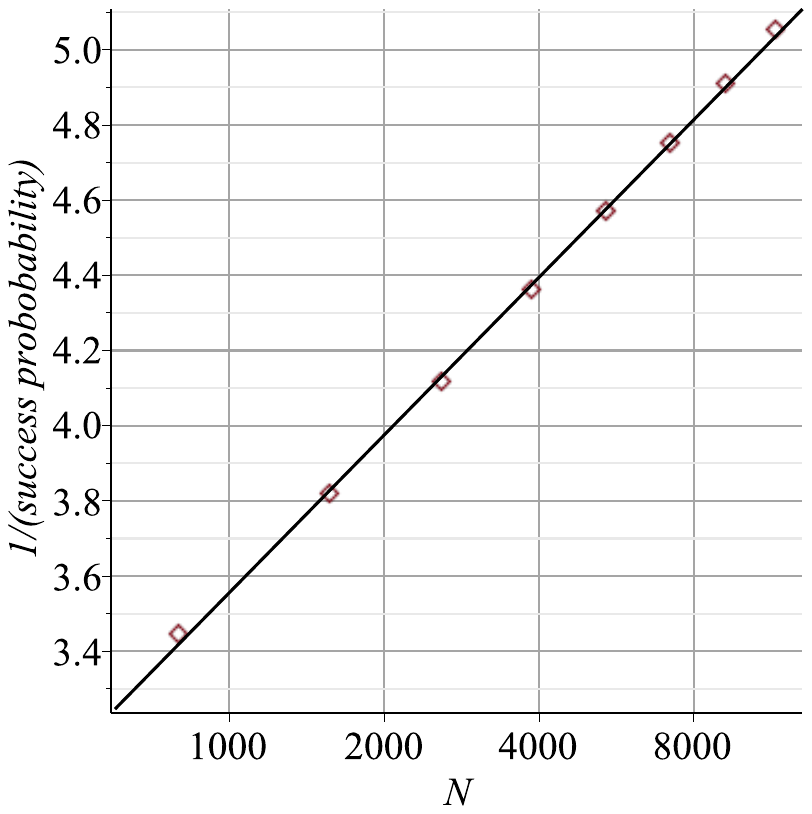}
\caption{Three plots that allow us to determine the time complexity of the searching algorithm. The first panel depicts the probability of finding the walker in a marked vertex as a function of the number of steps. The second panel depicts the loglog plot of the running time as function of the number of vertices $N$. The third panel depicts the semilog plot of the inverse of the success probability as function of $N$. In all of them we included the fitting curve using the least-square method.}
\label{fig:prob_vs_steps_n=20}
\end{figure}

To perform a numerical analysis to obtain the efficiency of the search algorithm based on $U$, we use the following three-part procedure. Firstly, we fix a value of $n$ and we plot the probability of finding the walker in a marked vertex as a function of the number of steps. This probability is given by
\begin{equation}
	p(t)\,=\, \sum_{k=0}^7 \bra{0,0,k}U^t\ket{\psi_0},
\end{equation}
where $t$ is the number of steps and $\psi_0$ is the initial condition, given by the uniform superposition. The first panel of Fig.~\ref{fig:prob_vs_steps_n=20} shows $p(t)$ for $n=30$ and the fitting curve. We take the first maximum of the fitting curve as the success probability and the corresponding abscissa as the running time of the searching algorithm. This process is repeated for many values of $n$. Secondly, we analyze how the running time increases as a function of the number of vertices $N$, which is depicted in the second panel of Fig.~\ref{fig:prob_vs_steps_n=20}. The fitting line shows that the running time is approximately $0.32 N^{0.57}$. Thirdly, we analyze how the success probability decreases as a function of $N$, which is depicted in the third panel of Fig.~\ref{fig:prob_vs_steps_n=20}. The fitting line shows that the success probability is approximately $0.53/(\ln N)^{0.60}$. Joining those three steps and extrapolating for large $N$, we conclude that the total running time with a constant success probability is $O( N^{0.57}(\ln N)^{0.30})$ because we use the amplitude amplification scheme~\cite{BBHT98}.

\section{Discussions and conclusions}\label{sec7}

An interesting question in this context is: Is there some advantage of using the SQW model? A partial answer comes from the following four points: (1)~When coined QWs are converted into the SQW model, it becomes clear that the coin and shift operators have the same nature and the distinction between them vanishes. What matters is the alternate action of two unitary operators. For example, decoherence by breaking edges of the graph, called percolation~\cite{Romanelli:2005,Oliveira:2006,KKNJ12}, can be applied to the edges that represent the coin. (2)~The SQW model helps to establish the equivalence between the coined and Szegedy's model~\cite{Por16}. (3)~The SQW model is an important step to unify the description of discrete-time QW models. (4)~The SQW model may help to understand the connection between coined and continuous-time QWs~\cite{Farhi:1998}. Ref.~\cite{DB15} described how to define a coined QW with Hermitian and unitary coins with a well-defined continuous-time limit. The authors used a method of expanding the graph on which the coined model takes place. This method is similar to the technique employed in Sec.~\ref{sec5} to express the coined QW as an instance of a staggered QW.

In this paper we have characterized which graphs are 2-tessellable by proving that a graph is 2-tessellable if and only if its clique graph is bipartite. The class of 2-tessellable SQWs is large enough to encompass Szegedy's model. Since it is defined using the product of only two local operators, we can employ Szegedy's spectral lemma~\cite{Szegedy:2004,PSFG15} to find the spectral decomposition of the evolution operator. We have also shown that 2-tessellable SQWs with no edge in the tessellation intersection can be cast into Szegedy's model.

Another contribution of this paper is to characterize the classes of graphs that help to establish the equivalence among discrete-time QW models. We use four classes:
\begin{itemize}
\item Class~1 -- graphs that are not line graphs.
\item Class~2a -- graphs that are line graphs of nonbipartite graphs.
\item Class~2b -- graphs that are line graphs of bipartite graphs.
\item Class~2b$'$ -- graphs that obey conditions (1) to (3) described on page~\pageref{3conditions}.
\end{itemize} 
We have shown that if a SQW with two tessellations is defined in Class~2b$'$ then it can be reduced to the coined model and any SQW defined in Classes 1 and 2a are not equivalent to Szegedy's QW. Besides, any SQW that can be cast into Szegedy's model must be in Class~2b, but there are SQWs in Class~2b that cannot be cast into Szegedy's model. Ref.~\cite{PSFG15} showed that if we convert an extended Szegedy's QW on $\Gamma$ into an equivalent SQW on the line graph of $\Gamma$, there is no edge in the tessellation intersection. Proposition~\ref{mainproposition} formally shows the inverse, that is, if there is exactly one vertex in each polygon intersection, then the SQW can be cast into the extended Szegedy model.

We have provided two examples of nontrivial 2-tessellable SQWs that are equivalent to the coined model, which help to understand the connection between the coined and staggered models. We have also given an example of a searching algorithm using a SQW on a graph in Class~1, which cannot be reduced to Szegedy's model. A numerical analysis has shown that this algorithm is more efficient than its classical analogue using random walks.

\appendix

\numberwithin{equation}{section}

%\newtagform{WithA}{(B}{)}              
%\usetagform{WithA}                     
%\newcommand{\Aref}[1]{(B\ref{#1})}                                 

%\makeatletter                                                       
%\def\chaptermark#1{\markboth{\MakeUppercase{\@chapapp. \  #1}}{}}   
%\makeatother     

\section*{Appendices}

\section{Glossary of terms in graph theory}\label{appendixA}

This appendix compiles the main definitions of graph theory used in this work~\cite{Die12,BLS99,Har94}.

A \textit{simple undirected graph} $\Gamma(V,E)$ is defined by a set $V$ of vertices or nodes and a set $E$ of edges so that each edge links two vertices and two vertices are linked by at most one edge. Two vertices linked by an edge are called \textit{adjacent}. Two edges  that share a common vertex are also called adjacent. The \textit{degree} of a vertex is the number of edges incident to the vertex. A graph is \textit{connected} when there is a path between every pair of vertices, otherwise it is called \textit{disconnected}. The \textit{complete graph} is a simple graph in which every pair of distinct vertices is connected by an edge.  A \textit{directed graph} is a graph whose edges have a direction associated with them. A \textit{multigraph} is an extension of the definition of graph that allows multiple edges between vertices. Most of the times, we use the term \textit{graph} as synonym of \textit{simple undirected graph}.

A subgraph $\Gamma'(V',E')$, where $V'\subset V$ and $E'\subset E$, is an \textit{induced subgraph} of $\Gamma(V,E)$ if it has exactly the edges that appear in $\Gamma$ over the same vertex set. If two vertices are adjacent in $\Gamma$ they are also adjacent in the induced subgraph. 

A \textit{bipartite graph} is a graph whose vertex set $V$ is the union of two disjoint sets $X$ and $X'$ so that no two vertices in $X$ are adjacent and no two vertices in $X'$ are adjacent. A \textit{complete bipartite graph} is a bipartite graph such that every possible edge that could connect vertices in $X$ and $X'$ is part of the graph and is denoted by $K_{m,n}$, where $m$ and $n$ are the cardinalities of sets $X$ and $X'$, respectively.

A \textit{clique} is a subset of vertices of a graph such that its induced subgraph is complete. A \textit{maximal clique} is a clique that cannot be extended by including one more adjacent vertex, that is, it is not contained in a larger clique. A \textit{maximum clique} is a clique of maximum possible size. A clique of size $d$ is called a $d$-\textit{clique}. A clique can have one vertex. Some references in graph theory use the term ``clique'' as synonym of \textit{maximal clique}. We avoid this notation here.

A \textit{clique graph} $K(\Gamma)$ of a graph $\Gamma$ is a graph such that every vertex represents a maximal clique of $\Gamma$ and two vertices of $K(\Gamma)$ are adjacent if and only if the underlying maximal cliques in $\Gamma$ share at least one vertex in common.

A \textit{clique partition} of a graph $\Gamma$ is a set of cliques of $\Gamma$ that contains each edge of $\Gamma$ exactly once. A \textit{minimum clique partition} is a clique partition with the smallest set of cliques. A \textit{clique cover} of a graph $\Gamma$ is a set of cliques of $\Gamma$ that contains each edge of $\Gamma$ at least once.  A \textit{minimum clique cover} is a clique cover with the smallest set of cliques.

A \textit{diamond graph} is a graph with 4 vertices and 5 edges consisting of a 4-clique minus one edge or two triangles sharing a common edge. A graph is \textit{diamond-free} if no induced subgraph is isomorphic to a \textit{diamond graph}.

A \textit{line graph} (or \textit{derived graph} or \textit{interchange graph}) of a graph $\Gamma$ (called \textit{root graph}) is another graph $L(\Gamma)$ so that each vertex of $L(\Gamma)$ represents an edge of $\Gamma$ and two vertices of $L(\Gamma)$ are adjacent if and only if their corresponding edges share a common vertex in $\Gamma$. 

A \textit{matching} $M\subseteq E$ is a set of edges without pairwise common vertices. An edge $m\in M$ \textit{matches} the endpoints of $m$. A \textit{perfect matching} is a matching that matches all vertices of the graph.

A \textit{planar graph} is a graph that can be drawn in a two-dimensional plane in such a way that no edges cross each other.

A \textit{proper coloring} or simply \textit{coloring} of a loopless graph is a labeling of the vertices with colors such that no two vertices sharing the same edge have the same color. A $k$-\textit{colorable} graph is the one whose vertices can be colored with at most $k$ colors so that no two adjacent vertices share the same color. This concept can be used for edges and other graph structures.

\section{Definition of Szegedy's QW}\label{appendixB}

Let us define Szegedy's QW model~\cite{Szegedy:2004} using the description given in Ref.~\cite{Por16}. Consider a connected bipartite graph $\Gamma(X,Y,E)$, where $X,Y$ are disjoint sets of vertices and $E$ is the set of non-directed edges. Let 
\begin{equation}\label{biadmatrix}
	\left(
	\begin{array}{cc}
		  0 & A \\
		  A^T & 0
	\end{array}
	\right)
\end{equation}
be the biadjacency matrix of $\Gamma(X,Y,E)$. Using $A$, define $P$ as a probabilistic map from $X$ to $Y$ with entries $p_{xy}$. Using $A^T$, define $Q$  as a probabilistic map from $Y$ to $X$ with entries $q_{yx}$. If $P$ is an $m\times n$ matrix, $Q$ will be an $n\times m$ matrix. Both are right-stochastic, that is, each row sums to 1. Using $P$ and $Q$, it is possible to define unit vectors
\begin{eqnarray}
% \nonumber to remove numbering (before each equation)
  \ket{\phi_x} &=&  \sum_{y\in Y} \sqrt{p_{x y}}\,\textrm{e}^{i\theta_{xy}} \, \ket{x,y}, \label{ht_phi_x} \\
  \ket{\psi_y}  &=&  \sum_{x\in X} \sqrt{q_{y x}}\,\textrm{e}^{i\theta'_{xy}} \, \ket{x,y}, \label{ht_psi_y}
\end{eqnarray}
that have the following properties: $\braket{\phi_x}{\phi_{x'}}=\delta_{xx'}$ and $\braket{\psi_y}{\psi_{y'}}=\delta_{yy'}$. In Szegedy's original definition, $\theta_{xy}=\theta'_{xy}=0$. We call \textbf{extended Szegedy's QW} the version that allows nonzero angles.

\begin{definition}\label{def:SzegedyQW}
\textbf{Szegedy's QW} on a bipartite graph $\Gamma(X,Y,E)$ with biadjacent matrix (\ref{biadmatrix}) is defined on a Hilbert space ${\cal H}^{m n} = {\cal H}^{m}\otimes {\cal H}^{n} $, where $ m = | X |$ and $n = | Y | $, the computational basis of which is $ \big \{\ket {x, y}: x \in X, y \in Y \big \} $.
The QW is driven by the unitary operator
\begin{equation}\label{ht_U_ev}
    W \,=\, R_1 \, R_0,
\end{equation}
where
\begin{eqnarray}
% \nonumber to remove numbering (before each equation)
  R_0 &=& 2\sum_{x\in X} \ket{\phi_x}\bra{\phi_x} - I, \label{ht_RA}\\
  R_1 &=& 2\sum_{y\in Y} \ket{\psi_y}\bra{\psi_y} - I. \label{ht_RB}
\end{eqnarray}
\end{definition}
Notice that operators $R_0$ and $R_1$ are unitary and Hermitian ($R_0^2=R_1^2=I$).

\section*{Acknowledgements}
The author acknowledges financial support from Faperj (grant n.~E-26/102.350/2013) and CNPq (grants n.~303406/2015-1, 4741\-43/2013-9). The author also acknowledges useful discussions with Franklin Marquezino and Luerbio Faria.

%\bibliographystyle{unsrt}       % APS-like style for physics
%\bibliography{bib}   % name your BibTeX data base

\end{document}